\newtheorem{theorem}{Theorem}
\newtheorem{lemma}[theorem]{Lemma}
\theoremstyle{definition}
\newtheorem{definition}[theorem]{Definition}
\theoremstyle{remark}
\newcommand{\sset}[1]{\left\{#1\right\}}
\newcommand{\tikzc}[3]{
  \centering{
    \includegraphics[width=#2\linewidth]{#3.pdf}
    \caption{#1}
    \label{fig:#3}%
  }
}
\newcommand{\tikzfigs}[3]{
  \begin{figure}[hbt]%
    \centering{%
      \includegraphics[width=#2\linewidth]{#3.pdf}
      \caption{#1}%
      \label{fig:#3}%
    }%
  \end{figure}%
}
\newcommand{\e}{\varepsilon}
\newcommand{\X}{\mathcal{X}}
\newcommand{\mZ}{\mathbb{Z}}
\newcommand{\F}{\mathbb{F}}
\newcommand{\Ber}{\textnormal{Ber}}
\newcommand{\spann}{\textnormal{span}}
\renewcommand{\P}[1]{\mathbb{P}\left[#1\right]}
\title{Entropic matroids and their representation}
\author{Emmanuel Abbe\\
EPFL, Department of Mathematics, Lausanne 1015, CH\\
\\
and\\
\\
Sophie Spirkl \\
Princeton University, Princeton, NJ 08540, USA}
\begin{document}

\maketitle 
\abstract{This paper investigates \emph{{entropic matroids}}, 
that is, matroids whose rank function is given as the Shannon entropy of random variables. In particular, we consider \emph{$p$-entropic matroids}, for which the random variables each have support of cardinality $p$. We draw connections between such entropic matroids and secret-sharing matroids and show that entropic matroids are linear matroids when $p=2,3$ but not when $p=9$. Our results leave open the possibility for $p$-entropic matroids to be linear whenever $p$ is prime, with particular cases proved here. Applications of entropic matroids to coding theory and cryptography are also discussed.}

\section{Introduction}
Matroid theory generalizes the notion of independence and rank beyond vector spaces. In a graphical matroid, for example, the rank of a subset of edges is the size of an acyclic spanning set of edges; analogous to the rank of a subset of vectors, which is the size of a spanning set of linearly independent vectors. It is natural to ask whether such combinatorial structures can also be obtained from probabilistic notions of independence, based on random variables. In particular, the entropy can be used to measure dependencies between random variables and it can be used to define a matroid rank function as discussed below. One can then investigate how such entropic matroids relate to other matroids, in particular whether they admit linear representations as graphical matroids do. Before~giving formal definitions of such entropic matroids, we give some general definitions for~matroids.

\subsection{Definitions}

We recall a few standard definitions related to matroids, see, for example, Oxley~\cite{oxley}. A \emph{matroid} is a pair $M=(E,r)$, where the ground set $E$ is a finite set (typically $E=[m]$, $m \in \mathbb{Z}_+$) and where the rank function $r: 2^E \to \mathbb{Z}_+$ satisfies 
\begin{enumerate}
  \item For any $A \subseteq E$, $r(A) \leq |A|$ (\emph{normalization});
  \item For any $A \subseteq B \subseteq E$, $r(A) \leq r(B)$ (\emph{monotonicity});
  \item For any $A, B \subseteq E$, $r(A \cup B) + r(A \cap B) \leq r(A) + r(B)$ (\emph{submodularity}). 
\end{enumerate}

The submodularity property can be interpreted as a diminishing return property: for every $A \subseteq B$ and $x \in E$, 
\begin{align}
r(A \cup x) - r(A) \geq r(B \cup x) - r(B),
\end{align}
that is, the larger the set, the smaller the increase in rank when adding a new element. Independent sets in a matroid are the subsets $S \subseteq E$ such that $r(S)=|S|$ and maximal independent sets are called \emph{bases}, whereas minimal dependent sets are called \emph{circuits}.

A matroid $M = (E,r)$ is \emph{linear} if there is a vector space $V$ and a map $f: E \rightarrow V$ such that $r(S) = rank(f(S))$ for all $S \subseteq E$, where $rank$ denotes the rank function of $V$, that is, $rank(f(S)) = \textnormal{dim span(f(S))}$. We say that a matroid is \emph{$\mathbb{F}$-representable} if in addition, $V$ can be chosen as a vector space over the field $\mathbb{F}$. 

Given a matroid $M$, a \emph{minor} of $M = (E, \mathcal{F})$ is a matroid that can be obtained from $M$ by a finite sequence of the following two operations: 
\begin{enumerate}
  \item \emph{{Restriction}}: Given $A \subseteq E$, we define the matroid $M | A = (A, \mathcal{F}\cap 2^A)$.
  \item \emph{Contraction}: Given an independent set $A \in \mathcal{F}$, we define the matroid $M/A = (E\setminus A, \sset{B \subseteq E\setminus A: B \cup A \in \mathcal{F}})$. 
  \end{enumerate}

  We define the \emph{dual} $M^* = (E, r^*)$ of a matroid $M = (E, r)$ is defined by letting $r^*(A) = r(E \setminus A) + |A| - r(E)$ for all $A \subseteq E$. A matroid property is a \emph{dual property} if $M$ has the property if and only if $M^*$~does.

\begin{theorem}[Woodall~\cite{woodall}]\label{thm:repdual}
{Being an $F$-representable matroid is a dual property, thas is, $M$ is $F$-representable if and only if $M^*$ is. }
\end{theorem}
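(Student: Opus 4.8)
The plan is to exhibit an explicit representation of $M^*$ built from one of $M$, and to verify it through a clean orthogonality argument. Since $(M^*)^* = M$ and the claimed property is an ``if and only if,'' it suffices to prove one implication: if $M$ is $\F$-representable, then so is $M^*$. So suppose $M$ is represented over $\F$, with $|E| = m$ and $r(E) = r$. A column matroid is unchanged under elementary row operations on the representing matrix, so after choosing a basis $B$ of $M$ and row-reducing I may assume the matrix has the standard form $A = [\,I_r \mid D\,]$, an $r \times m$ matrix whose columns are indexed by $E$, the first $r$ (forming $I_r$) corresponding to $B$. The candidate representation of $M^*$ is then
\[
A^* = [\,-D^T \mid I_{m-r}\,],
\]
an $(m-r) \times m$ matrix over $\F$ with columns indexed by $E$ in the same order. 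A one-line check gives $A (A^*)^T = -D + D = 0$, so the row space of $A^*$ lies in the orthogonal complement $W^\perp$ of $W := \mathrm{rowspace}(A)$ for the standard bilinear form on $\F^m$; since $\dim W = r$ and $A^*$ has full row rank $m-r = \dim W^\perp$, in fact $\mathrm{rowspace}(A^*) = W^\perp$.

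Next I would reduce the claim to a statement about bases. From the rank formula $r^*(S) = r(E \setminus S) + |S| - r(E)$ one checks directly that a set $Y$ with $|Y| = m - r$ is a basis of $M^*$ if and only if $E \setminus Y$ is a basis of $M$. Since a matroid is determined by its collection of bases, and the column matroid $M[A^*]$ has rank $m-r$, it suffices to show: for every $(m-r)$-subset $Y \subseteq E$, the columns of $A^*$ indexed by $Y$ are linearly independent if and only if the columns of $A$ indexed by $E \setminus Y$ are linearly independent.

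The core is the following characterization. For $X \subseteq E$ with $|X| = r$, the columns of $A$ indexed by $X$ are linearly independent exactly when the coordinate-projection $p_X \colon W \to \F^X$ is an isomorphism, i.e. when $W \cap (\F^X)^\perp = 0$, where $\F^X \subseteq \F^m$ is the coordinate subspace supported on $X$ (and $(\F^X)^\perp$ is supported on $E \setminus X$). Applying the same description to $A^*$, whose row space is $W^\perp$, the columns indexed by $Y = E \setminus X$ are independent iff $W^\perp \cap (\F^Y)^\perp = W^\perp \cap \F^X = 0$. So everything reduces to the identity
\[
\dim\bigl(W \cap (\F^X)^\perp\bigr) = \dim\bigl(W^\perp \cap \F^X\bigr),
\]
which follows from nondegeneracy of the standard form: taking complements gives $\bigl(W \cap (\F^X)^\perp\bigr)^\perp = W^\perp + \F^X$, and combining $\dim U^\perp = m - \dim U$ with the inclusion-exclusion formula for $\dim(W^\perp + \F^X)$ collapses both sides to $m - \dim(W^\perp + \F^X)$. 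Hence one side vanishes iff the other does, giving exactly the required equivalence; therefore $M[A^*] = M^*$ and $M^*$ is $\F$-representable over the same field.

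I expect the main obstacle to be the correct handling of the orthogonal-complement bookkeeping over an arbitrary field $\F$. The form $\langle x, y\rangle = \sum_i x_i y_i$ is nondegenerate in every characteristic (its Gram matrix is $I_m$), so $\dim W^\perp = m - \dim W$ and $(W^\perp)^\perp = W$ remain valid; the only care needed is to avoid conflating ``orthogonal complement'' with ``direct-sum complement,'' since isotropic vectors (such as self-orthogonal ones over $\F_2$) mean that $W \oplus W^\perp = \F^m$ can fail even though the dimension identity does not. Once that identity is secured, the verification that $A^*$ represents $M^*$, together with the reduction to a single implication via $(M^*)^* = M$, is routine.
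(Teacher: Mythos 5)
The paper offers no proof of this theorem for you to be compared against: it is imported as a black box from Woodall's lecture notes (the citation \cite{woodall}) and never proved internally, so your proposal stands or falls on its own. Judged on its own, it is correct, and it is the classical argument (essentially the one in Oxley's \emph{Matroid Theory}): the reduction to a single implication via $(M^*)^* = M$ is legitimate; the computation $A(A^*)^T = -D + D = 0$ together with the rank count $\mathrm{rank}(A^*) = m - r = \dim W^\perp$ correctly identifies $\mathrm{rowspace}(A^*)$ with $W^\perp$; the translation of ``the $r$ columns of $A$ indexed by $X$ are independent'' into $W \cap (\F^X)^\perp = 0$ is right, because with $|X| = r = \dim W$ injectivity of the coordinate projection $p_X|_W$ is equivalent to its being an isomorphism; and your dimension identity is correctly derived, both sides collapsing to $m - \dim(W^\perp + \F^X)$. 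Two details deserve emphasis if you write this up. First, the identity $\dim\bigl(W \cap (\F^X)^\perp\bigr) = \dim\bigl(W^\perp \cap \F^X\bigr)$ is special to the case $|X| = \dim W$; in general the two sides differ by $|X| - \dim W$, so the hypothesis $|X| = r$ is doing real work in that step and should be flagged where the identity is invoked. Second, your caution about isotropic vectors is exactly the right one: over $\F_2$, for instance, $W \cap W^\perp$ can be nonzero, and the proof survives only because it never uses $W \oplus W^\perp = \F^m$, relying solely on $\dim U^\perp = m - \dim U$ and $(U^\perp)^\perp = U$, which hold for the standard form over every field by nondegeneracy.
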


\subsection{Entropic Matroids}

One may expect that matroids could also result from probabilistic structures. Perhaps the first possibility would be to define a matroid to be `probabilistic' if its elements can be represented by random variables (with a joint distribution on some domain), such that a subset $S$ is independent if the random variables indexed by $S$ are mutually independent. This, however, does not necessarily give a matroid. For example, let $X_1$ and $X_2$ be independent random variables (for example, normally distributed) and let $X_3=X_1+X_2$. Let $A=\{3\}$, $B=\{1,3\}$ and $x=\{2\}$. Then $r(A \cup x) - r(A)=0$ since $X_2$ and $X_3$ are dependent but $r(B \cup x) - r(B)=1$ since $B \cup x= \{1,2,3\}$ contains two independent random variables. So this violates the submodularity requirement.

On the other hand, it is well known that the entropy function satisfies the monotonicity and submodularity properties~\cite{fuji,lovasz}. Namely, for a probability measure $\mu$ on a discrete set $\X$, the \emph{entropy of $\mu$ in base $q$} is defined by 
\begin{align}
H(\mu)&= - \sum_{x \in \X} \mu(x) \log_{q} \mu(x).
\end{align}

For two random variables $X$ and $Y$ with values in $\mathcal{X}$ and $\mathcal{Y}$ respectively and with joint distribution $\mu$, we define the \emph{conditional entropy} 
\begin{align}
H(X|Y) %= \sum_{y \in \mathcal{Y}} \mu(y) H(X|Y=y) 
= \sum_{x \in \mathcal{X}, y \in \mathcal{Y}} \mu(x,y) \log \frac{\mu(x,y)}{ \sum_{u \in \X}\mu(u,y)}.
\end{align}

In particular, we have the chain rule of entropy $H(X|Y) = H(X,Y) - H(Y)$. 
We also define the \emph{Hamming distance} of two vectors $x$ and $y$ as $d(x,y) = |\sset{1 \leq i \leq n \colon x_i \neq y_i}|$ and  the \emph{Hamming ball} of radius $r$ around $x$ as $B_r(x) = \sset{y \colon d(x,y) \leq r}$.

Furthermore, for a probability measure $\mu$ of $m$ random variables defined each on a domain $\X$, that is, for a probability distribution $\mu$ on $\X^m$, one can define the function 
\begin{align}
r(S) =H(\mu_S), \quad S \subseteq [m], \label{entr}
\end{align}
where $\mu_S$ is the \emph{marginal} of $\mu$ on $S$, that is, 
\begin{align}
\mu_S (x[S])&=\sum_{x_i \in [q] : i \notin S } \mu(x), \quad x[S]=\{x_i : i \in S\}.
\end{align}

By choosing the base $q$ for the entropy in \eqref{entr} to be $|\X|$, we also get that $r(S) \le |S|$, with equality for uniform measures.
Therefore, the above $r$ satisfies the three axioms of a rank function, with the exception that $r$ is not necessarily integral.
In fact this defines a polymatroid (and $r$ is also called a $\beta$-function~\cite{edmonds}) and entropic polymatroids (i.e., polymatroids derived from such entropic $\beta$-functions) have been studied extensively in the literature; see References~\cite{han,yeung,yeung2,li} and references therein. Using the Shannon entropy to study matroid structures already emerged in the works~\cite{matus3,matus1}, where the family of pairs of sets $(i,j)$ and $K$ such that $K \subseteq [m]$, $i,j \in [m] \setminus K$ is called probabilistically representable if there exit random variables $\{X_k\}_{k \in [m]}$ such that $X_i$ and $X_j$ are conditionally independent given $X_K$, with the latter expressed in terms of the Shannon entropy as $r(i,K)+r(j,K)-r(i,j,K)-r(K)=0$.

However, we can also investigate what happens {\it if} this function $r$ is in fact integral. This is the object of study in this paper.
\begin{definition}
Let $q \in \mathbb{Z}_+$. A matroid $M=([m],r)$ is \emph{$q$-entropic} if there is a probability distribution $\mu$ on $[q]^m$ such that for any $S \subseteq [m]$,
\begin{align}
r(S)=H(\mu_S),
\end{align}
where $\mu_S$ is the marginal of $\mu$ on $S$ and $H$ is the Shannon entropy in base $q$.
\end{definition}
Note that the entropy does not depend on the support of the random variables but only on their joint distribution. For this reason, the restriction that $\mu$ is taking values in $[q]^m$ is in fact equivalent to requiring that each random variable has a support of cardinality at most $q$. When working with the $m$ underlying random variables $X_1,\dots,X_m$ distributed according to $\mu$, we write $H(S)=H(X[S])=H(X_i: i \in S)=H(\mu_S)$.

With the integrality constraint, the random variables representing a $q$-entropic matroid must be marginally either uniformly distributed or deterministic, each pair of random variables must be either independent or a deterministic function of each other, and so on. These represent therefore extremal dependencies. As discussed in Section \ref{polar}, such distributions (with extremal dependencies) have recently emerged in the context of polarization theory and multi-user polar codes~\cite{corr}, which has motivated in part this paper. In Section \ref{secret}, we also comment on the connection between secret sharing from cryptography.  

It is well-known and easy to check that entropic matroids generalize linear matroids, see,~for~example, References~\cite{abbe_arxiv,li}. For completeness we recall the proof, making explicit the dependency on the field size. 
\begin{lemma} Let $\mathbb{F}$ be a finite field. If a matroid is $\F$-representable then it is $|\F|$-entropic.
\end{lemma}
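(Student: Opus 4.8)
The plan is to realize the field representation directly as a linear family of random variables over $\F$. Suppose the matroid $M=([m],r)$ is $\F$-representable, so there is a vector space $V$ over $\F$ and vectors $v_1,\dots,v_m \in V$ with $r(S)=\dim\spann\sset{v_i : i \in S}$ for every $S \subseteq [m]$. Writing $d=r([m])$, I may assume $V=\F^d$ (restricting to the span of all the $v_i$) and collect the $v_i$ as the columns of a $d \times m$ matrix over $\F$. Setting $q=|\F|$, I then draw $u$ uniformly at random from $\F^d$ and define the random variables $X_i = \langle u, v_i\rangle = u^\top v_i \in \F \cong [q]$. This produces a joint distribution $\mu$ on $[q]^m$, and it remains to check that $H(\mu_S)=r(S)$ for all $S$ when the entropy is taken in base $q$.

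The heart of the argument is to compute the distribution of the marginal $X[S]=(X_i)_{i \in S}$. For fixed $S$, the map $\phi_S \colon \F^d \to \F^S$, $u \mapsto (\langle u,v_i\rangle)_{i \in S}$, is $\F$-linear, and its matrix has rows $v_i^\top$ for $i \in S$; hence its rank equals $\dim\spann\sset{v_i : i \in S} = r(S)$. The image $W_S = \mathrm{Im}(\phi_S)$ is therefore a subspace of $\F^S$ of dimension $r(S)$, containing exactly $q^{r(S)}$ points.

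The key fact I would invoke is that a linear map pushes the uniform measure forward to the uniform measure on its image: every nonempty fiber $\phi_S^{-1}(w)$ is a coset of $\ker\phi_S$ and so has the same cardinality $q^{d-r(S)}$, whence $\P{X[S]=w} = q^{-r(S)}$ for each $w \in W_S$ and $0$ otherwise. Thus $X[S]$ is uniform on a set of $q^{r(S)}$ elements, and the entropy of a uniform distribution on $N$ points in base $q$ is $\log_q N$. Therefore $H(\mu_S)=\log_q q^{r(S)} = r(S)$, as required.

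The argument is essentially a bookkeeping exercise once the construction is in place; the only points demanding care are the identification of the rank of $\phi_S$ with the matroid rank $r(S)$ (row rank equals column rank, which matches $\dim\spann\sset{v_i : i \in S}$) and the verification that the pushforward is \emph{exactly} uniform on $W_S$ rather than merely supported on it. Both follow from the coset structure of the fibers, so I do not anticipate a genuine obstacle; it is precisely the choice of base $q$ for the entropy that makes the ranks come out as integers matching $r$.
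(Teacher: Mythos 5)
Your proposal is correct and is essentially the paper's own proof: both take a linear representation, apply it to a uniformly random vector over $\F$, and use the resulting random variables, whose joint distribution realizes the rank function as entropy in base $|\F|$. The only difference is cosmetic --- you spell out the coset-counting argument showing the marginal $X[S]$ is exactly uniform on an $r(S)$-dimensional image subspace, a verification the paper states without proof.
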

\begin{proof}
Let $M$ be an $\mathbb{F}$-representable matroid and $A$ be a matrix in $\mathbb{F}^{|E| \times n}$ whose rows correspond to elements of $E$ so that a subset of rows is linearly independent in $\mathbb{F}^n$ if and only if the corresponding subset of $E$ is independent in $M$. Let $Y_1, \dots, Y_n$ be mutually independent and uniformly distributed random variables over $\mathbb{F}$ and let $Y = (Y_1, \dots, Y_n)$. Then the vector of random variables $(X_1, \dots, X_{|E|}) = A \cdot Y$ satisfies that for any $B \subseteq E$, $H(\sset{X_i \colon i \in B}) = \mathrm{rank}\sset{A_i \colon i \in B}$. Thus the entropy function on $X_1, \dots, X_{|E|}$ recovers the rank function of $M$ and $M$ is $|\F|$-entropic. 
\end{proof}

Our main goal throughout the remainder of this paper is
to investigate whether entropic matroids are always representable over fields. As discussed in next section, we will approach this question by checking whether the forbidden minors of representable matroids are
entropic or not. This strategy is justified by the fact that for the
Shannon entropy, entropic matroids are a minor-closed class, as we
will show in Lemma~\ref{lem:minor}.

\subsection{Results}
We prove that for every $p$, a matroid is $p$-entropic if and only if it is secret-sharing with a ground set of size $p$, which is equivalent to being the matroid of an almost affine code with alphabet size $p$. Furthermore, we prove that for every $p$, being $p$-entropic is closed under taking matroid minors. 

We give alternative proofs that for $p=2$ and $p=3$, being $p$-entropic is equivalent to being $\mathbb{F}_p$-representable by examining known forbidden minor characterizations. We also make some partial progress towards proving the same for other primes $p$. In the final section of the paper, we mention some applications of entropic matroids in coding. 

\section{Further Related Literature}
Matroid representations and forbidden minors were studied in Reference~\cite{mat1} for GF(3), Reference~\cite{mat2,mat3} for GF(4) and some results for general fields were obtained in References~\cite{mat4,mat5,mat6}. Linear representable matroids are also intimately related to linear solutions to network coding problems, in particular in Reference~\cite{net1}, in which a network-constrained
matroid enumeration algorithm is developed, as well as Reference~\cite{net2} that considers integer-valued polymatroids and   representable polymatroids in References~\cite{net3,net4}. Matroid’s minors and the connection to Zhang-Yeung inequality was discussed in Reference~\cite{matus4}, which shows in particular that almost entropic matroids have infinitely many excluded minor. Matroids, secret sharing and linearity are also discussed in several papers as mentioned in part earlier. Reference~\cite{seymour2} gave the first example of an access structure (i.e., the parties that can recover the secret from their share) induced by a matroid, namely the Vamos matroid, that is non-ideal (a measure of optimality of the secret shares lengths); Reference~\cite{beimel} presented the first non-trivial lower bounds on the size of the domain of the shares for secret-sharing schemes realizing an access structure induced by the Vamos matroid and this is later improved in Reference~\cite{martin} using using non-Shannon inequalities for the entropy function. As mentioned earlier, an important line of work is also dedicated to understanding the representation of entropic polymatroids for a fixed ground set cardinality~\cite{yeung2}, which is well-understood for cardinality 2 and 3 and more complicated for larger cardinality with the non-Shannon inequalities emerging.

\section{Minors of Entropic Matroids}

In this section, we prove the following: 

\begin{lemma} \label{lem:minor}
  Let $M$ be an entropic matroid on random variables $X_1, \dots, X_m$
  with values in $\mathbb{F}_p$ and with entropy $H$ and joint
  distribution $\mu$.
  \begin{enumerate}[leftmargin=11mm,labelsep=5.5mm]
  \item [(i)] For any $A \subseteq \sset{X_1, \dots, X_m}$, $M|A$ is
    entropic.
  \item [(ii)] For any $X_i \in \sset{X_1, \dots, X_m}$ with $H(X_i)=1$, $M/\sset{X_i}$ is
    entropic.
  \item [(iii)] For any independent set $A$, $M/A$ is
    entropic.
  \end{enumerate}
\end{lemma}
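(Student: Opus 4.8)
The plan is to handle restriction directly and to reduce both contraction statements to a single structural fact: that any distribution representing an entropic matroid is \emph{uniform on its support}. For (i), given $A \subseteq \sset{X_1,\dots,X_m}$, I would take the marginal $\mu_A$ as the representing distribution. For any $S \subseteq A$ the marginal of $\mu_A$ on $S$ coincides with $\mu_S$, so its entropy is $H(\mu_S) = r(S) = r_{M|A}(S)$, which is exactly the rank function of $M|A$; no further work is needed.

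The crux is the following structural lemma, which I would prove first: for every $S \subseteq [m]$, the marginal $\mu_S$ is uniform on a support of size $p^{r(S)}$. I would prove this by the chain rule. Fixing an ordering $j_1, \dots, j_n$ of $S$, write
\begin{align}
H(\mu_S) = \sum_{t=1}^{n} H\!\left(X_{j_t} \mid X_{j_1}, \dots, X_{j_{t-1}}\right),
\end{align}
and note that each summand equals $r(\sset{j_1,\dots,j_t}) - r(\sset{j_1,\dots,j_{t-1}})$, hence lies in $\sset{0,1}$ since adding one element to a matroid raises the rank by $0$ or $1$. If the $t$-th term is $1$, then because $0 \le H(X_{j_t}\mid \cdot) \le H(X_{j_t}) \le 1$ we must have $H(X_{j_t})=1$ and $I(X_{j_t}; X_{j_1},\dots,X_{j_{t-1}})=0$, so $X_{j_t}$ is uniform on $[p]$ and independent of its predecessors; if the term is $0$, then $X_{j_t}$ is a deterministic function of its predecessors. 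Writing $B$ for the set of indices with conditional entropy $1$, the joint distribution therefore factorizes as a mutually independent, uniform choice of the coordinates in $B$ together with deterministic constraints for the remaining coordinates, so $\mu_S$ is uniform on its support, of size $p^{|B|} = p^{r(S)}$. This step is the main obstacle: a single conditioning value need not reproduce the \emph{average} conditional entropy $r(S\cup\{i\}) - r(\{i\})$ unless the fibers are balanced, and it is precisely this uniformity that guarantees the balance.

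Given uniformity, (ii) is immediate. Since $H(X_i)=1$, the variable $X_i$ is uniform on all of $[p]$, so every value $a \in [p]$ occurs; I would represent $M/\sset{X_i}$ by the conditional distribution $\nu = \mu(\,\cdot \mid X_i = a)$ for any fixed $a$. For $S \subseteq [m]\setminus\sset{i}$, uniformity of $\mu_{S \cup \sset{i}}$ (support size $p^{r(S\cup\{i\})}$) together with $\mu_{\sset i}$ being uniform on $[p]$ shows that the fiber over $X_i = a$ has size $p^{r(S\cup\{i\})-1}$; projecting away the now-constant coordinate $i$ is injective, so $\nu_S$ is uniform on $p^{r(S\cup\{i\})-1}$ points and $H(\nu_S) = r(S\cup\{i\}) - 1 = r(S \cup \{i\}) - r(\{i\}) = r_{M/\{X_i\}}(S)$. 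Thus $\nu$ represents $M/\sset{X_i}$.

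Finally, (iii) follows from (ii) by induction on $|A|$. If $A = \sset{X_{i_1},\dots,X_{i_k}}$ is independent then every subset of $A$ is independent, so $r(\sset{i_1,\dots,i_t}) = t$ for each $t$; in particular $H(X_{i_1})=1$, and after contracting $X_{i_1}$ via (ii) the resulting entropic matroid $M/\sset{X_{i_1}}$ satisfies $H(X_{i_2}) = r(\sset{i_1,i_2}) - r(\sset{i_1}) = 1$, so (ii) applies again. Iterating $k$ times, and using that $M/\sset{X_{i_1}}/\cdots/\sset{X_{i_t}} = M/\sset{X_{i_1},\dots,X_{i_t}}$, yields that $M/A$ is entropic; equivalently, one may condition on $X[A]=a$ all at once and invoke uniformity of $\mu_{S\cup A}$ directly. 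Note that the argument never uses the field structure of $\F_p$, only that each variable has support at most $p$ and that $H$ is taken in base $p$.
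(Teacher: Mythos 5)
Your proposal is correct, but its engine is genuinely different from the paper's. Both share the same skeleton---marginalize for (i), condition on a value of $X_i$ for (ii), iterate for (iii)---but where you first prove a global structure lemma (every marginal $\mu_S$ is uniform on a support of size $p^{r(S)}$, obtained from the chain rule by noting each conditional entropy is a rank increment in $\sset{0,1}$ and hence each coordinate is either uniform and independent of its predecessors or determined by them), the paper works locally. It splits (ii) into the case where $X_i$ is independent of all other variables (so contraction equals deletion) and the dependent case; there it rules out $H(A \mid X_i = k) < H(A)-1$ by passing to a basis $B$ of $A$, then to the unique circuit $C$ of $B \cup \sset{X_i}$, computing that the distribution on $C$ is uniform on $p^{|C|-1}$ points so that $H(C \mid X_i = k) = |C|-2$ exactly, a contradiction; an averaging argument over $k$ then forces $H(A \mid X_i=k) = H(A)-1$ for every $k$. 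In effect, the paper establishes your uniformity statement only where it is needed (on independent sets and circuits), at the price of the case split and the averaging step, while you establish it everywhere up front, which collapses (ii) to a fiber-counting exercise, removes the separate independent case, and lets (iii) be done either by induction or by conditioning on all of $X[A]$ at once. Your lemma also buys something the paper's local argument does not: it is precisely the fact (the distribution is uniform on its support, i.e., the support is an almost affine code) that the paper later asserts without proof in the secret-sharing section (Lemma~\ref{lem:secret}, ``these outcomes are all equally likely and the number of distinct outcomes with positive probability is $p^{H(Y)}$''), so your route simultaneously fills that gap; and, as you note, neither argument uses the field structure of $\F_p$, only the alphabet size and the base of the entropy.
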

\begin{proof}
  For each of the claims, we construct random variables and a probability distribution whose entropy agrees with the rank function of the matroid in question.
  
  To prove (i), we consider the variable set $A$ with the marginal
  distribution given by $\mu$. Then $H$ is integral on any subset of $A$, since it is integral on any subset of $\sset{X_1, \dots, X_m}$. This implies (i). 

  To prove (ii), we consider two cases. If for any $B \subseteq
  \sset{X_1, \dots, X_m}$ with $X_i \not\in B$ we have $H(X_i, B) =
  H(B)+1$, then $X_i$ is independent of all other variables. In
  particular, any set is independent in $M$ if any only if its union
  with $\sset{X_i}$ is. Therefore, $M/\sset{X_i} = M|\sset{X_1, \dots,
    X_{i-1}, X_{i+1}, \dots, X_m}$ in this case and the result follows
  from (i). 

  Otherwise, we define a distribution on
  $\sset{X_1, \dots, X_{i-1}, X_{i+1}, \dots, X_m}$ by fixing any
  value $x$ for $X_i$ with $\P{X_i = x} > 0$ and considering the
  probability distribution obtained by conditioning on the event
  $\{X_i = x\}$.  Now let $A \subseteq \sset{X_1, \dots, X_{i-1}, X_{i+1}, \dots, X_m}$. There   are two cases. If there is no circuit $C$ with $X_i \in C$ such that   $A$ contains $C \setminus \sset{X_i}$ as a subset, then 
  $H(A) + 1 = H(A, X_i) = H(A) + H(X_i | A)$, therefore
  $H(X_i | A) = 1$ and so $X_i$ and $A$ are independent. In this case,
  $H(A | X_i = x) = H(A)$, thus $H$ agrees with the rank function of
  $M/\sset{X_i}$.

  If adding $X_i$ to $A$ creates a circuit, then $H(A,X_i) = H(A)$ and 
  $H(A|X_i) = H(A)-1$. Let $X(A)$ denote the vector with
  components $X_j, j \in A$ and let $\mathcal{Y} = \mathbb{F}_p^A$ denote the
  set of possible values of $X(A)$. 

  Suppose first that $H(A|X_i = k) < H(A) - 1$ for some $k \in \mathbb{F}$. Now let $B$ be a basis
  in $A$, that is, $|B| = H(B) = H(A)$. We have that $H(A|X_i=k)=H(B|X_i=k)
  + H(A|B,X_i=k)$ and $H(A|B, X_i=k) \leq H(A,X_i|B) = H(A|B) =
  0$. Therefore, $H(B|X_i=k) < |B|-1$.
  
  Now let $C$ be the unique circuit in $B \cup \{i\}$. It follows that $H(C) = H(C\setminus\sset{X_i}) = |C|-1$ and
  $H(B\setminus C|C) = H(B)-H(C) = |B\setminus C|$. In particular, the
  variables in $B\setminus C$ are independent of $X_i$ in the marginal
  distribution on $B$ and thus $$H(B|X_i = k) = H(B\setminus C) + H(C
  \setminus \sset{X_i}| X_i=k, B \setminus C) = |B\setminus C| + H(C |
  X_i=k).$$ 
  
  This implies that $H(C|X_i=k) < |B| - |B\setminus C| - 1 = |C|
  -2$. But $\P{X_i = k | X(C\setminus \sset{X}) = c} \in \sset{0,1}$
  and $\P{X(C\setminus \sset{X_i}) = c} = p^{-|C|+1}$, which implies
  that $\P{X(C)=c} \in \sset{0, p^{-|C|+1}}$ and $\P{X(C\setminus
    \sset{X_i}) = c | X_i = k} \in \sset{0, p^{-|C|+2}}$. Since these
  probabilities add up to one, it follows that exactly $p^{|C|-2}$ of them are
  non-zero, which yields
\begin{align*} 
  H(C|X_i=k) &= \sum_{c} \P{X(C\setminus \sset{X_i})=c | X_i=k} \log_p \left(\frac{1}{\P{X(C\setminus \sset{X_i})=c | X_i=k}}\right). \\
  &= p^{|C|-2} \left( p^{-|C|+2} \log_p \left(\frac{1}{p^{-|C|+2}}\right)\right) \\
&= |C| -2,
\end{align*}
a contradiction to the assumption $H(C) < |C|-2$. 

This implies that $H(A|X_i = k) \geq H(A) - 1$ for all $A$. Since
\begin{align*} 
  H(A) - 1 &= H(A|X_i) = \sum_{k=0}^{p-1} \P{X_i = k}H(A|X_i = k) \\
  &= \sum_{k=0}^{p-1} \frac{1}{p} H(A|X_i = k) \geq p \cdot \frac{1}{p} (H(A)-1) = H(A)-1,
\end{align*}
 it follows that we have $H(A|X_i = k) = H(A) - 1$ for all summands. This implies that the entropy of the conditional distribution yields the entropic matroid $M/\sset{X_i}$ and this proves (ii). 

Finally, (iii) follows by applying (ii) repeatedly. 
\end{proof}

This lemma proves that the property of being an entropic matroid is closed under taking minors. This means that in order to show entropic matroids belong to a minor-closed class of matroids, it~suffices to show that the forbidden minors of this class are not entropic.

\section{Secret-Sharing and Almost Affine Matroids}\label{secret}

Secret-sharing matroids were introduced in Reference~\cite{bd}. These matroids are motivated by the problem of secret-sharing in cryptography~\cite{secret1,secret2}, which refers to distributing a secret among a collection of parties via secret shares such that the secret can be reconstructed by combining a sufficient number (of possibly different types) of secrete shares, while individual shares being of no use on their~own.

We use the following definitions from Reference~\cite{seymour2}: 
Let $A \in S^{I \times E}$ be a matrix, where~$S, I$ and $E$ are finite sets. For $i \in I$, $e \in E$ and $Y \subseteq E \setminus \sset{e}$,
we define $n(i, e, Y ) = \sset{a_{je} \colon j \in I, a_{jy} = a_{iy}\textnormal{ for all }y \in Y }$. Then $A$ is a \emph{secret-sharing matrix} if
for $e \in E$ and $Y \subseteq E \setminus \sset{e}$, either $n(i, e, Y ) = S$ for all $i \in I$ or $|n(i, e, Y )| = 1$ for all
$i \in I$. Any secret-sharing matrix induces a \emph{secret-sharing matroid} with ground set $E$ and rank
function $r(Y )$ the logarithm with base $|S|$ of the number of distinct rows of the submatrix
$A[Y ] = (a_{ij} \colon i \in I, j \in Y )$ of $A$. In particular, $Y$ is independent if and only if $A[Y ]$ contains all
vectors in $S^Y$.

The interpretation is as follows. Suppose some row $i \in I$ has been chosen in $A$ but its value has been kept secret. Knowing $A$, one wishes to determine as much as possible about the values $a_{ie}, e \in E$, without knowing which row has been selected. If by some means one has been able to determine the values $a_{if}$
for all $f \in Y \subseteq E$. Then the possible values of $a_{ie}$ for some $e \in E \setminus Y$, consistent with the available information, are precisely the members of $n(i, e, Y)$ (and
this set can be determined despite not knowing $i$).

Secret-sharing matroids were connected to entropy rank functions in Reference~\cite{mfp}, as further discussed below. 
We now formally connect the two classes of matroids. 
\begin{lemma} \label{lem:secret} 
If a matroid is $p$-entropic, then it is a secret-sharing matroid with a ground set of size $p$. 
\end{lemma}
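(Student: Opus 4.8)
The plan is to read a secret-sharing matrix directly off the distribution $\mu$ and then verify the two defining requirements: the dichotomy for the sets $n(i,e,Y)$ and the agreement of the two rank functions. The workhorse throughout is a $0/1$ dichotomy for conditional entropy. For $e \notin Y$, the chain rule gives $H(X_e \mid X_Y) = r(Y \cup \sset{e}) - r(Y)$, which is a nonnegative integer because $M$ is a matroid; moreover $H(X_e \mid X_Y) \le H(X_e) \le \log_p p = 1$, so $H(X_e \mid X_Y) \in \sset{0,1}$. When this quantity is $0$, writing $H(X_e \mid X_Y) = \sum_{y} \P{X_Y = y}\, H(X_e \mid X_Y = y)$ and using nonnegativity of each summand shows that $X_e$ is a deterministic function of $X_Y$ on every $y$ with $\P{X_Y = y} > 0$. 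When it is $1$, the displayed chain of inequalities is tight, so $H(X_e) = 1$ (hence $X_e$ is uniform on $[p]$) and $H(X_e \mid X_Y) = H(X_e)$ (hence $X_e$ is independent of $X_Y$); consequently the conditional law of $X_e$ given any attainable value of $X_Y$ is uniform on all of $[p]$.

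Next I would record the structural fact that every marginal $\mu_Y$ is uniform on its support with $|\supp(\mu_Y)| = p^{r(Y)}$, proved by induction on $|Y|$ using the same dichotomy (the cases $|Y| \le 1$ being immediate). For the inductive step, add an element $e$ to $Y$: if $H(X_e \mid X_Y) = 0$ then $\supp(\mu_{Y \cup \sset{e}})$ is in bijection with $\supp(\mu_Y)$ and uniformity is inherited; if $H(X_e \mid X_Y) = 1$ then $X_e$ is uniform and independent of $X_Y$, so $\mu_{Y \cup \sset{e}} = \mu_Y \otimes \textnormal{Unif}([p])$ is, by the inductive hypothesis, uniform on a support of size $p \cdot p^{r(Y)} = p^{r(Y \cup \sset{e})}$.

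I then take $S = [p]$, column set $E = [m]$, row set $I = \supp(\mu)$, and entries $a_{xe} = x_e$. For a fixed $e$ and $Y \subseteq E \setminus \sset{e}$, the set $n(x, e, Y)$ is exactly the support of the conditional law of $X_e$ given $X_Y = x_Y$. By the dichotomy, if $H(X_e \mid X_Y) = 0$ this support is a singleton for every attainable $x_Y$, so $|n(x,e,Y)| = 1$ for all rows $x$; if $H(X_e \mid X_Y) = 1$ it is all of $[p]$ for every attainable $x_Y$, so $n(x,e,Y) = S$ for all rows $x$. This is precisely the secret-sharing condition, so $A$ is a secret-sharing matrix with $|S| = p$.

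Finally, the distinct rows of the submatrix $A[Y]$ are exactly the points of $\supp(\mu_Y)$, so their number is $p^{r(Y)}$ by the uniformity fact, and the secret-sharing rank $\log_p |\supp(\mu_Y)| = r(Y)$ coincides with the matroid rank; hence $M$ is the secret-sharing matroid induced by $A$. I expect the main obstacle to be the uniformity step: one must be careful that a rank-$1$ increment forces not merely that $X_e$ is non-constant but that it is exactly uniform on $[p]$ and independent of $X_Y$, and that a rank-$0$ increment forces determinism for every conditioning value simultaneously rather than only on average. Once this $0/1$ analysis is pinned down, both the $n(i,e,Y)$-dichotomy and the rank computation are routine.
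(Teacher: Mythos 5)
Your proposal is correct and follows essentially the same route as the paper's proof: you build the same matrix whose rows are the positive-probability outcomes, use the same $\{0,1\}$ dichotomy for $H(X_e \mid X_Y)$ to get determinism versus independence, and match the two rank functions via $\log_p$ of the number of distinct rows. The only difference is one of completeness: you supply the inductive proof that each marginal $\mu_Y$ is uniform on a support of size $p^{r(Y)}$, a fact the paper asserts without justification.
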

\begin{proof}
  Given a $p$-entropic matroid $M$ with ground set $E$ and rank
  (entropy) function $H$, we let $A$ be the matrix containing all vectors
  in $\mathbb{Z}^E_p$ which correspond to outcomes of positive
  probability in $M$. For~every set $Y$ of variables, $A[Y ]$ contains the
  possible outcomes of these variables. These outcomes are all equally
  likely and the number of distinct outcomes with positive
  probability is $p^{H(Y )}$. This~implies that to prove that $M$ is a secret-sharing matroid, it suffices to prove that $A$ is a secret-sharing matrix.

  Let $e \in E$ and $Y \subseteq E \setminus \sset{e}$. Then $n(i, e,
  Y )$ is the number of possible values of the random variable $X_e
  \in E$ associated with $e$ when $Y$ is fixed to its values in
  outcome $i$. But $H(X_e|Y ) \in \sset{0, 1}$ and if $H(X_e|Y ) = 0$
  then $X_e$ is determined by the values of $Y$ and $|n(i, e, Y )| =
  1$ for all $i$; if $H(X_e|Y ) = 1$ then $X_e$ is independent of the
  values of the variables in $Y$ and thus $n(i, e, Y ) =
  \mathbb{Z}_p$. This proves that $A$ is a secret-sharing matrix.
\end{proof}

Note that this proof remains true for any $p \in \mathbb{N}_{\geq 1}$,
that is, it does not require the ground set to be a field. The
converse of Lemma \ref{lem:secret} is true as well: every
secret-sharing matroid is $p$-entropic for some $p$. This was observed
in Reference~\cite{mfp} and we include a proof for completeness. Together, this observation and Lemma \ref{lem:secret} provide an
alternative characterization of entropic matroids as secret-sharing~matroids.

\begin{lemma} Every secret-sharing matroid with ground set $S$ is $|S|$-entropic.
\end{lemma}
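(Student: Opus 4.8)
The plan is to show that a secret-sharing matrix $A \in S^{I \times E}$ can be turned into a joint probability distribution on $|S|$-ary random variables whose entropy recovers the matroid rank function. This is essentially the reverse of the construction in Lemma~\ref{lem:secret}, so the natural candidate is to place the uniform distribution on the rows of $A$: treat a uniformly random row index $i \in I$ as the source of randomness and define $X_e = a_{ie}$ for each $e \in E$. Relabeling $S$ as $[|S|]$, this gives a distribution $\mu$ on $[|S|]^E$, and the first task is to verify that $\mu$ is well-defined and that its marginals correspond to the submatrices $A[Y]$.

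With this distribution in hand, the core of the argument is to prove that $H(\mu_Y) = r(Y)$ for every $Y \subseteq E$, where $r$ is the secret-sharing rank function, i.e.\ $r(Y) = \log_{|S|}(\text{number of distinct rows of } A[Y])$. The key step is to show that under the uniform distribution on rows, every distinct row-pattern appearing in $A[Y]$ occurs with \emph{equal} probability; once that is established, the entropy of $\mu_Y$ is exactly the base-$|S|$ logarithm of the number of distinct patterns, which is $r(Y)$ by definition. First I would prove this equal-frequency claim by induction on $|Y|$, adding one element $e$ at a time and invoking the secret-sharing property: conditioned on any fixed value-pattern $c$ of $Y \setminus \sset{e}$, the dichotomy $|n(i,e,Y\setminus\sset{e})| = 1$ versus $n(i,e,Y\setminus\sset{e}) = S$ forces the conditional distribution of $X_e$ to be either a point mass or uniform on all of $S$. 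In the first case no new patterns are created and each retains its probability; in the second case each existing pattern splits into exactly $|S|$ equally likely extensions. Either way the uniformity of pattern frequencies is preserved, which drives the induction through.

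The main obstacle will be making the equal-frequency invariant precise and checking that the secret-sharing condition, which is phrased row-by-row via $n(i,e,Y)$, indeed yields the clean conditional dichotomy I need. In particular I must be careful that the conditioning event $\{X(Y\setminus\sset{e}) = c\}$ has positive probability and that the set $n(i,e,Y\setminus\sset{e})$ does not depend on which representative row $i$ realizing $c$ is chosen---but this independence is exactly guaranteed by the secret-sharing axiom, which asserts the alternative ($|n|=1$ for all $i$, or $n=S$ for all $i$) uniformly over $i$. Once this is set up, the computation of $H(X_e \mid X(Y\setminus\sset{e}))$ is a one-line case split giving the value $0$ or $1$, and the chain rule $H(\mu_Y) = H(\mu_{Y\setminus\sset{e}}) + H(X_e \mid X(Y\setminus\sset{e}))$ combined with the inductive hypothesis completes the identification of $H$ with $r$.

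Finally I would note that this argument simultaneously shows the entropy function is integer-valued (each conditional-entropy increment is $0$ or $1$), so the resulting $r = H$ genuinely satisfies the integrality required of a $|S|$-entropic matroid rather than merely being a polymatroid. Combined with Lemma~\ref{lem:secret}, this establishes the promised equivalence between $p$-entropic matroids and secret-sharing matroids with ground set of size $p$.
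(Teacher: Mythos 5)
Your construction is the paper's construction (a uniform distribution over the rows of $A$), but the proof you propose for it has two problems, one small and one central. The small one: you place the uniform distribution on the row index set $I$, but the secret-sharing axiom does not forbid repeated rows with unequal multiplicities. For instance, the $3\times 1$ matrix over $S=\{0,1\}$ with rows $0,0,1$ is a secret-sharing matrix (here $n(i,e,\emptyset)=S$ for every $i$), and its matroid has $r(\{e\})=1$, yet your $X_e$ takes the value $0$ with probability $2/3$ and so has entropy strictly less than $1$. The paper avoids this by first deleting duplicate rows ``without loss of generality'' and only then taking the uniform distribution on the remaining (distinct) rows; your outline needs the same preliminary step.

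The central gap is in your inductive step. In the case $n(i,e,Y\setminus\{e\})=S$ you assert that the axiom ``forces'' the conditional distribution of $X_e$ given a pattern $c$ to be uniform on $S$. It does not: $n(i,e,Y\setminus\{e\})$ is a \emph{set} of values, so the axiom only says that every element of $S$ occurs among the rows consistent with $c$, i.e.\ that the conditional distribution has full support; it says nothing about frequencies. Under the uniform-on-rows distribution, $\P{X_e=s \mid X(Y\setminus\{e\})=c}$ is proportional to the number of rows of $A$ extending the pattern $(c,s)$, and the equality of these counts over $s\in S$ is precisely what has to be proved. Your induction invariant (all patterns of $A[Y\setminus\{e\}]$ equally likely) cannot supply it: knowing that the patterns of $A[Y\setminus\{e\}]$ have equal row-counts and that each splits into $|S|$ nonempty classes does not determine how each count splits. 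The claim you need is true, but it is the substance of the lemma, and the paper proves it by running the induction in the opposite direction, downward on $|E\setminus Y|$ starting from $Y=E$: there uniformity over distinct rows holds trivially (the rows are distinct and equiprobable by construction), and when an element $e$ is marginalized out, each pattern of $A[Y]$ is the union of either exactly one or exactly $|S|$ patterns of $A[Y\cup\{e\}]$ --- the same alternative for all patterns, by the axiom --- each of equal probability by the induction hypothesis, so uniformity and integrality of $H$ are preserved. To repair your upward induction you would first have to establish this downward counting statement (every pattern of $A[Y]$ extends to the same number of full rows), at which point your argument coincides with the paper's.
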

\begin{proof} Let $M$ be a secret-sharing matroid and $A$ a
  secret-sharing matrix inducing $M$. Without loss of generality, we
  may assume that $A$ does not contain two identical rows, since this
  does not affect the structure of the matroid. The definition of
  secret-sharing matroids implies that the number of rows of $A$ is a power
  $|S|^r$ of $|S|$. We define a probability distribution on the set of
  random variables $\sset{X_e : e \in E}$ by setting the probability that $(X_e)_{e \in E} = a$ as $|S|^{-r}$ for every row $a$ of $A$. 

  We proceed by induction on $|E\setminus Y|$ to show that $H(Y )$ (with the Shannon
  entropy with base $|S|$) is integral for every $Y \subseteq E$ and
  moreover, that the resulting probability distribution on $Y$ is the
  uniform distribution on the distinct rows of $A[Y ]$. This is
  clearly true for $Y = E$, since $H(E) = r$. Let $Y \subset E$ and
  let $e \in E \setminus Y$ , then by the induction hypothesis, $H(Y \cup
  \sset{e}) = k \in \mathbb{N}$.  The matrix $A[Y \cup \sset{e}]$ has
  $|S|^k$ distinct rows and each distinct row has the same
  probability $|S|^{-k}$.  If $H(X_e|Y ) = 0$, then~$H(Y ) = k$ and
  distinct rows in $A[Y \cup \sset{e}]$ are distinct rows of $A[Y ]$ and
  thus the distribution of the variables in $Y$ is the same as for the
  variables of $Y \cup \sset{e}$. Therefore, we may assume that fixing
  the values of the variables in $Y$ does not always determine
  $X_e$. This means that $n(i, e, Y ) = |S|$ for all $i$. In~  particular, every distinct row of $A[Y ]$ gives rise to $|S|$ distinct rows
  in $A[Y \cup \sset{e}]$ and thus $A[Y ]$ has $|S|^{k-1}$ distinct
  rows. Each distinct row has the same multiplicity $|S|^{r-k}$ in
  $A[Y \cup \sset{e}]$ by the induction hypothesis and thus each distinct row
  of $A[Y ]$ has multiplicity $|S|^{r-k+1}$. Now the resulting distribution of the variables in $Y$ is a uniform distribution with $|S|^{k-1}$
  distinct outcomes, therefore $H(Y ) = k - 1$. Clearly,~$r_M(Y ) = k - 1$ and
  therefore this induction allows us to conclude that the rank in $M$
  coincides with the entropy of the constructed distribution. This
  implies the result. 
\end{proof}

Seymour~\cite{seymour2} proved that the Vamos matroid is not a secret sharing matroid. This implies that it is not an entropic matroid for any $p$. 

Moreover, there is a secret-sharing matroid which is not
representable over the corresponding field (with $|S|$ elements) and
which has been discovered by Simonis and Ashikhmin~\cite{simonis}. This example is the non-Pappus matroid, shown
in Figure~\ref{fig:non-pappus}. This matroid has nine elements
$\sset{1, \dots , 9}$ as its ground set $E$ and each $X \subseteq E$ has
rank $\min(|X|, 3)$ with the exception of the eight 3-elements sets shown
as colored lines, which each have rank 2. Pappus' theorem proves that
this matroid is not representable over any field. 

\tikzfigs{The non-Pappus matroid.}{0.4}{non-pappus}

Simonis and Ashikhmin~\cite{simonis} show that
the row space of the matrix 
$$\begin{bmatrix} 
10 & 10 & 00 & 10 & 00 & 10 & 10 & 10 & 00\\
01 & 01 & 00 & 01 & 00 & 01 & 01 & 01 & 00\\
00 & 00 & 00 & 10 & 10 & 21 & 01 & 10 & 10\\
00 & 00 & 00 & 02 & 01 & 20 & 12 & 02 & 01\\
00 & 10 & 10 & 01 & 00 & 01 & 00 & 11 & 10\\
00 & 01 & 01 & 21 & 00 & 21 & 00 & 10 & 01\\
\end{bmatrix}$$
is a secret-sharing matrix, where each entry of the matrix is
considered as an element of $\mathbb{F}^2_3$.  They~introduce another
definition of entropic matroids via codes: a code (subset)
$\mathcal{C} \subseteq S^E$ is \emph{almost affine} if $r(Y ) :=
\log_{|S|} (|\mathcal{C}_Y |) \in \mathbb{N}_0$ for all $Y \subseteq E$,
where $\mathcal{C}_Y$ denotes the projection of $\mathcal{C}$ to the
variables in $Y$. The corresponding matroid $M$ with ground set $E$ and
rank function $r$ is called an \emph{almost affine matroid}. It is not hard to
see that this definition coincides with secret-sharing matroids by
using the codewords in $C$ as the rows of the secret-sharing matrix $A$
and vice versa. These results show that not all entropic matroids are
representable by giving a 9-entropic matroid which is not
representable over any field.

\section{The Case \boldmath{$p=2$}}
An $\mathbb{F}_2$-representable matroid is called \emph{binary}. The goal of this section is to prove the following. 
\begin{theorem}
Every 2-entropic matroid is binary.
\end{theorem}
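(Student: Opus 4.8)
The plan is to combine the minor-closure of entropic matroids established in Lemma~\ref{lem:minor} with the classical excluded-minor characterization of binary matroids due to Tutte: a matroid is $\mathbb{F}_2$-representable if and only if it has no minor isomorphic to the uniform matroid $U_{2,4}$, the rank-$2$ matroid on four elements in which every pair is independent but every triple is dependent. Since the class of $2$-entropic matroids is minor-closed, any $2$-entropic matroid that fails to be binary would contain $U_{2,4}$ as a minor, and that minor would itself be $2$-entropic. Hence the entire theorem reduces to the single assertion that $U_{2,4}$ is \emph{not} $2$-entropic.

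To prove this, I would suppose for contradiction that $U_{2,4}$ is represented by random variables $X_1,X_2,X_3,X_4$ with support in $\sset{0,1}$ (entropy in base $2$), matching the rank function $r(S)=\min(|S|,2)$. Reading off the constraints: $H(X_i)=1$ forces each $X_i$ to be uniform on a $2$-element support, which after relabeling we may take to be $\sset{0,1}$; $H(X_i,X_j)=2$ forces every pair to be independent and uniform on $\sset{0,1}^2$; and $H(X_i,X_j,X_k)=2$ combined with $H(X_i,X_j)=2$ forces $H(X_k \mid X_i,X_j)=0$ via the chain rule, so each variable is a deterministic function of any other two.

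The heart of the argument is to pin down which functions can occur. Writing $X_3=f(X_1,X_2)$, the requirement that $X_3$ be uniform and independent of each of $X_1$ and $X_2$ says exactly that $f$ is balanced along every row and along every column of the $\sset{0,1}^2$ table. A short case analysis shows that the only such $f$ are $x_1 \oplus x_2$ and its complement $x_1 \oplus x_2 \oplus 1$; that is, $X_3 = X_1 \oplus X_2 \oplus a$ for some $a \in \sset{0,1}$. Applying the same reasoning to the triple $\sset{1,2,4}$ yields $X_4 = X_1 \oplus X_2 \oplus b$ for some $b \in \sset{0,1}$. But then $X_3 \oplus X_4 = a \oplus b$ is constant, so $X_3$ and $X_4$ are either always equal or always complementary; in either case $H(X_3,X_4)=1$, contradicting the requirement $H(X_3,X_4)=2$. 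This contradiction shows that $U_{2,4}$ is not $2$-entropic and completes the proof.

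The only genuinely nontrivial step is the $U_{2,4}$ computation, and within it the observation that the pairwise-independence and uniformity constraints collapse the admissible functions down to the two affine (XOR) maps; everything else is bookkeeping with the chain rule together with appeals to Tutte's theorem and Lemma~\ref{lem:minor}. I do not expect a serious obstacle here, since the rank-$2$, four-element structure is small enough to analyze completely by hand.
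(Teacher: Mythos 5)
Your proposal is correct. The outer framework is exactly the paper's: invoke Tutte's excluded-minor characterization of binary matroids, use the minor-closedness of $2$-entropic matroids (Lemma~\ref{lem:minor}), and reduce everything to showing $U_{2,4}$ is not $2$-entropic. Where you diverge is in the proof of that key lemma. The paper argues directly on probability masses: from $H(X_i,X_j\mid X_k,X_l)=0$ it deduces every atom has probability $0$ or $\tfrac14$, normalizes so that $0000$ is an outcome, kills all other outcomes with two zeros, forces the three-ones outcomes to have probability $\tfrac14$ each, and reaches the contradiction $\tfrac14 = \P{X_1=1,X_2=1} \geq \tfrac12$. You instead exploit determinism structurally: each variable is a function of any other two, the pairwise uniformity and independence constraints force that function to be balanced in every row and column of the $\sset{0,1}^2$ table, hence $X_3 = X_1\oplus X_2\oplus a$ and $X_4 = X_1 \oplus X_2 \oplus b$, so $H(X_3,X_4)=1$, a contradiction. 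Both arguments are short and complete; yours is worth noting because it is the same technique the paper itself deploys later for the Fano plane (where the representation is forced to satisfy $X_4 = X_1+X_2+X_3$), and it makes visible the conceptual point behind the theorem --- that the entropy constraints force an affine ($\mathbb{F}_2$-linear) structure on any putative representation --- whereas the paper's counting argument is closer in spirit to its subsequent treatment of $U_{2,5}$ and $U_{3,5}$, which proceeds by counting admissible outcomes rather than classifying functions.
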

To prove this, we use the characterization of binary matroids proved by Tutte~\cite{tutte} stating that a matroid is binary if and only if it has no $U_{2,4}$-minor. $U_{2,4}$ is the uniform matroid of rank two on four elements: $E = [4]$ and $\mathcal{F}$ consists of all subsets of $E$ of cardinality at most two. Using Tutte's characterization, the theorem follows from the next lemma. 
\begin{lemma}
$U_{2,4}$ is not 2-entropic.
\end{lemma}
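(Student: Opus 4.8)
The plan is to argue by contradiction. Suppose $U_{2,4}$ is $2$-entropic, realized by binary random variables $X_1, X_2, X_3, X_4$ whose entropy function equals the rank function, i.e. $H(S) = \min(|S|, 2)$ for every $S \subseteq [4]$. First I would record the constraints this forces on the joint distribution. Since $H(X_i) \le 1$ for each $i$, subadditivity (a consequence of submodularity) gives $H(X_1, X_2) \le H(X_1) + H(X_2) \le 2$; as $H(X_1, X_2) = 2$, equality holds throughout, so $H(X_1) = H(X_2) = 1$ and $H(X_1, X_2) = H(X_1) + H(X_2)$. In base $2$ this means each $X_i$ is uniform on $\{0,1\}$ and that $X_1$ and $X_2$ are independent, hence $(X_1, X_2)$ is uniform on $\{0,1\}^2$.

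Next I would exploit the rank-$2$ constraint on triples. From $H(X_1, X_2, X_3) = 2 = H(X_1, X_2)$ and the chain rule $H(X_3 \mid X_1, X_2) = H(X_1, X_2, X_3) - H(X_1, X_2)$, I get $H(X_3 \mid X_1, X_2) = 0$, so $X_3$ is almost surely a deterministic Boolean function $X_3 = f_3(X_1, X_2)$; the identical argument gives $X_4 = f_4(X_1, X_2)$. The problem is thereby reduced to a finite classification: which functions $f \colon \{0,1\}^2 \to \{0,1\}$ can occur here?

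The key step is this classification. The requirement $H(X_3) = 1$ forces $f_3$ to be balanced, and $H(X_1, X_3) = H(X_2, X_3) = 2$ force $X_3$ to be independent of $X_1$ and of $X_2$ separately. Inspecting the six balanced functions of two bits, only $a \oplus b$ and its complement $a \oplus b \oplus 1$ are independent of each coordinate, while the four functions that depend on a single coordinate fail independence. Hence $X_3 = X_1 \oplus X_2 \oplus c_3$ and, identically, $X_4 = X_1 \oplus X_2 \oplus c_4$ for some constants $c_3, c_4 \in \{0,1\}$. The contradiction then falls out immediately: $X_3 \oplus X_4 = c_3 \oplus c_4$ is constant, so $X_3$ and $X_4$ are either equal or complementary, giving $H(X_3, X_4) = 1$, which contradicts $H(X_3, X_4) = 2$.

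I do not expect a genuine obstacle, as every step is a short entropy computation or a finite check; the only point needing care is verifying that the classification of balanced, coordinatewise-independent functions is exhaustive. Conceptually, this argument is the entropy-theoretic shadow of the classical reason $U_{2,4}$ is not binary: a rank-$2$ vector space over $\mathbb{F}_2$ contains only three distinct one-dimensional directions, so one cannot place four elements in general position, and the XOR rigidity derived above is precisely this scarcity of directions expressed through Shannon entropy.
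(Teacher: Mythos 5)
Your proof is correct, but it diverges from the paper's at the decisive step. Both arguments open identically: each $X_i$ is uniform, each pair is uniform on $\{0,1\}^2$ (pairwise independence), and any two variables determine the other two. From there the paper works directly with the support of the joint distribution: every atom has probability $\tfrac{1}{4}$, so after normalizing $0000$ to be an atom, no other atom can agree with it in two coordinates, which forces the three-ones outcomes to be atoms, and then $\mathbb{P}[X_1=1,X_2=1]\geq \tfrac12$ gives the contradiction---a pure counting argument on atoms. You instead convert the determinism into functional form, $X_3=f_3(X_1,X_2)$ and $X_4=f_4(X_1,X_2)$, and classify the admissible functions: among the six balanced Boolean functions of two bits, only $a\oplus b$ and its complement are independent of each coordinate (your classification is exhaustive and correct), whence $X_3\oplus X_4$ is constant and $H(X_3,X_4)=1$, a contradiction. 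Your version makes the algebraic obstruction explicit---it is the entropy analogue of the fact that $\mathbb{F}_2^2$ has only three nonzero directions---and in that sense it is more conceptual, directly mirroring the classical proof that $U_{2,4}$ is not binary. The paper's support-counting template is less structural but scales better: essentially the same argument is reused for $U_{2,5}$, $U_{3,5}$, and $U_{2,p+2}$ over larger alphabets, where enumerating all balanced coordinate-independent functions would be considerably more laborious.
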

\begin{proof}
  Suppose for a contradiction that $\mu$ is a probability distribution on four random variables $X_1, \dots, X_4$ whose entropy is the rank function of $U_{2,4}$, then $H(X_i) = 1$ for all $i$ and $H(X_i, X_j) = 2$ for all $i\neq j$; furthermore $H(X_1, X_2, X_3, X_4) = 2$. This implies that $\P{X_i = a, X_j = b} = \frac{1}{4}$ for all $i \neq j$ and $a, b \in \mathbb{F}_2$, because the marginal distribution of $X_i$ and $X_j$ has to be the product of two independent $\Ber\left(\frac{1}{2}\right)$ distributions to achieve an entropy of two.

Furthermore, $H(X_i, X_j | X_k, X_l) = 0$ for $\sset{i,j,k,l} = [4]$ by the chain rule and therefore $\P{X_1 = a, X_2 = b, X_3 = c, X_4 = d} \in \sset{0, \frac{1}{4}}$ for all $a, b, c, d$. Without loss of generality, we may assume that $\P{X_1 = 0, X_2 = 0, X_3 = 0, X_4 = 0} = \frac{1}{4}$ but then every other event in which at least two different variables $X_i$ and $X_j$ are zero must have probability zero, since $\P{X_i = 0, X_j = 0} = \frac{1}{4}$. Since $\P{X_i = 0, X_j = 1} = \frac{1}{4}$, it follows that all outcomes with three ones have probability $\frac{1}{4}$. Now~$\frac{1}{4} = \P{X_1 = 1, X_2 = 1} \geq \P{X_1 = 1, X_2 = 1, X_3 = 0, X_4 = 1} + \P{X_1 = 1, X_2 = 1, X_3 = 1, X_4 = 0} = \frac{1}{2}$, a~contradiction. 
\end{proof}

\section{The Case \boldmath{$p=3$}}
An $\mathbb{F}_3$-representable matroid is called \emph{ternary}. The following structure theorem has been proved independently by Seymour~\cite{seymour} and Bixby~\cite{bixby}, who attributed it to Reid. 

\begin{theorem}[Seymour~\cite{seymour}, Bixby~\cite{bixby}]
A matroid is ternary if and only if it contains no minor isomorphic to $U_{2,5}$, $U_{3,5}$, the Fano plane $F_7$ or its dual. 
\end{theorem}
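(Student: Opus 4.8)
The plan is to prove the two implications separately, noting that the forward direction (``ternary $\Rightarrow$ no forbidden minor'') is elementary while the converse is the substantial content established by Seymour and Bixby. Since $\F_3$-representability is preserved under both restriction and contraction, the class of ternary matroids is minor-closed; hence for the forward direction it suffices to verify that none of $U_{2,5}$, $U_{3,5}$, $F_7$, and $F_7^*$ is itself ternary, and the conclusion then follows because any matroid containing one of these as a minor would inherit non-representability.

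To check these four base cases I would argue as follows. A simple rank-$2$ matroid that is $\F_3$-representable embeds in the projective line over $\F_3$, which has only $|\F_3|+1 = 4$ points; since $U_{2,5}$ requires five elements in general position (no two parallel, every pair a basis), it cannot be represented, so $U_{2,5}$ is not ternary. For $U_{3,5}$ I would use that $U_{3,5} = U_{2,5}^{*}$ together with Theorem~\ref{thm:repdual} (Woodall), which states that $\F$-representability is a dual property: since $U_{2,5}$ is not ternary, neither is its dual. For the Fano plane, the standard fact is that $F_7$ is representable over a field if and only if that field has characteristic $2$; as $\F_3$ has characteristic $3$, $F_7$ is not ternary, and applying Theorem~\ref{thm:repdual} once more rules out $F_7^{*}$.

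The hard part is the converse: every matroid avoiding all four minors is ternary. Here I would follow the standard route for excluded-minor characterizations of representability. First I would reduce to the $3$-connected case, since representability over a fixed field is compatible with direct sums and $2$-sums, so that a minor-minimal counterexample may be assumed $3$-connected. Then I would apply a chain theorem of splitter type (Seymour's splitter theorem, or Tutte's wheels-and-whirls theorem) to obtain a $3$-connected counterexample from a smaller representable one by a single-element extension or coextension, and argue that the $\F_3$-representation extends along this step. The crux, and the genuine obstacle, is the detailed case analysis showing that any obstruction to extending the representation forces one of $U_{2,5}$, $U_{3,5}$, $F_7$, or $F_7^{*}$ as a minor; controlling these obstructions is precisely the technical heart of the Seymour and Bixby proofs, and for the purposes of this paper I would invoke their results rather than reproduce this analysis.
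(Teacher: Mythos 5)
This theorem is not proved in the paper at all: it is quoted as a known result of Seymour and Bixby and used as a black box, which is exactly what you do for the substantive (``no forbidden minors $\Rightarrow$ ternary'') direction, so your treatment is consistent with the paper's. Your added verification of the easy direction is correct as stated --- minor-closedness of $\F_3$-representability, the $4$-point projective line ruling out $U_{2,5}$, duality (Theorem~\ref{thm:repdual}) for $U_{3,5}=U_{2,5}^*$ and $F_7^*$, and the characteristic-$2$ fact for $F_7$ --- but it is material the paper never needed to write down.
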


The Fano plane, shown in Figure~\ref{fig:fano-plane}, has a ground set $E=[7]$ and can be represented over $\mathbb{F}_2$ by the column vectors of the matrix
$\begin{bmatrix}
1 & 1 & 0 & 0 & 0 & 1 & 1 \\
1 & 0 & 1 & 0 & 1 & 0 & 1 \\
1 & 0 & 0 & 1 & 1 & 1 & 0
\end{bmatrix}$, that is, a set is independent if and only if it contains at most three vectors and it does not contain all three vectors on any line (including the circle). 

\tikzfigs{The Fano plane.}{0.45}{fano-plane}

\begin{lemma} $U_{2,5}$ is not 3-entropic.
\end{lemma}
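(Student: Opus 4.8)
The plan is to mirror the $U_{2,4}$ argument but to extract the sharper combinatorial obstruction that appears at $p=3$. Suppose for contradiction that $\mu$ is a distribution on $X_1,\dots,X_5$ realizing the rank function of $U_{2,5}$, so that $H(X_i)=1$ for all $i$, $H(X_i,X_j)=2$ for all $i\neq j$, and $H(S)=2$ for every $S$ with $|S|\geq 2$. As in the previous lemma, $H(X_i)=1$ forces each $X_i$ to be uniform on its three symbols, and $H(X_i,X_j)=2$ forces each pair to be jointly uniform on $\F_3^2$, i.e.\ independent. Since $H(X_k\mid X_i,X_j)=H(S)-H(X_i,X_j)=0$ whenever $\sset{i,j,k}$ has three elements, every variable is a deterministic function of every pair of the others, and the whole distribution is supported on exactly $9$ equiprobable outcomes.

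First I would coordinatize the support using the pair $(X_1,X_2)$: because $(X_1,X_2)$ is uniform on $\F_3^2$ and determines all remaining variables, the support is in bijection with $\F_3^2$, and there are functions $f_3,f_4,f_5\colon \F_3^2\to\F_3$ with $X_k=f_k(X_1,X_2)$ for $k\in\sset{3,4,5}$. The next step is to read off the constraints these functions must satisfy. Joint uniformity of $(X_1,X_k)$ forces each row of $f_k$ to be a permutation of $\F_3$, and joint uniformity of $(X_2,X_k)$ forces each column to be a permutation; hence each $f_k$ is a Latin square of order $3$. Joint uniformity of $(X_k,X_\ell)$ for distinct $k,\ell\in\sset{3,4,5}$ says exactly that $f_k$ and $f_\ell$ are orthogonal. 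Thus $f_3,f_4,f_5$ constitute three mutually orthogonal Latin squares of order $3$.

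To finish, I would invoke the classical bound that the number of mutually orthogonal Latin squares of order $n$ is at most $n-1$; for $n=3$ this gives at most $2$, contradicting the existence of the three squares $f_3,f_4,f_5$. Equivalently, in the language of nets, the five variables induce five pairwise-orthogonal partitions of a $9$-element set into triples, i.e.\ a net of order $3$ and degree $5$, whereas a net of order $n$ has degree at most $n+1=4$. I expect the main work to be the bookkeeping of the previous paragraph---verifying that all ten pairwise-uniformity constraints translate precisely into the Latin-square and orthogonality conditions---rather than the final counting step; the bound itself has a short proof (normalize each $f_k$ to have first row $0,1,2$, then observe that the entries $f_k(1,0)$ are distinct and nonzero, hence at most two of them), and since $n=3$ is tiny one could alternatively reach the contradiction by direct inspection of the orthogonal mates of an order-$3$ Latin square.
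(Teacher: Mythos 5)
Your proof is correct, but it reaches the contradiction by a genuinely different route than the paper. Both arguments share the same first half: the rank constraints force every $X_i$ to be uniform, every pair $(X_i,X_j)$ to be jointly uniform on $\mathbb{F}_3^2$, and every variable to be a deterministic function of any pair of the others, so the support consists of exactly nine outcomes of probability $\frac{1}{9}$ each. From there the paper stays elementary and ad hoc: it normalizes so that $(0,0,0,0,0)$ is an outcome, notes that no other outcome can then have two or more zero coordinates, and derives a contradiction by counting the probability available to the remaining zero-patterns. You instead coordinatize the support by $(X_1,X_2)$, check (correctly) that the ten pairwise-uniformity constraints translate exactly into $f_3,f_4,f_5$ being mutually orthogonal Latin squares of order $3$, and invoke the classical bound that at most $n-1$ MOLS of order $n$ exist; the short proof of that bound you sketch (normalize first rows, compare the entries in position $(1,0)$) is the standard one and is valid. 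What your route buys: it is more conceptual and strictly more general --- the identical argument shows that $U_{2,k}$ is not $p$-entropic whenever $k \geq p+2$, which is precisely the lemma the paper proves later (Section 7, $U_{2,p+2}$ for any $p \geq 2$) by a separate counting argument, and it also explains why $U_{2,p+1}$ \emph{is} $p$-entropic for prime $p$, since a full set of $p-1$ MOLS then exists. What the paper's route buys: it is self-contained, requires no design-theoretic background, and runs parallel to its $U_{2,4}$ proof, at the cost of being a one-off computation.
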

\begin{proof}
Suppose for a contradiction that there exist $X = (X_1, \dots, X_5)$ such that $H(A) = \min\sset{|A|,2}$ for all $A \subseteq \sset{X_1, \dots, X_5}$. Then, for any choice of $\sset{a,b,c,d,e} = \sset{1,2,3,4,5}$, we have that $H(X_a, X_b, X_c | X_d, X_e) = 0$ and thus for any vector $x \in \mathbb{F}_3^5$, $$\P{X_a = x_a, X_b = x_b, X_c = x_c | X_d = x_d, X_e = x_e} \in \sset{0,1}$$ and $\P{X = x} \in \sset{0, \frac{1}{9}}$. 

As in the proof for $U_{2,4}$, we may assume that $\P{X = 0} = \frac{1}{9}$ but then any other event with at least two zeros must have probability 0. This leaves six events, five with one zero and one with no zeros; but each of them has probability at most $\frac{1}{9}$, thus the total probabilities add up to at most $\frac{7}{9}$, a~contradiction. 
\end{proof}

\begin{lemma} $U_{3,5}$ is not 3-entropic.
\end{lemma}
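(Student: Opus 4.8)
The plan is to first turn the entropy hypotheses into a rigid combinatorial object and then rule out its existence. Suppose $X=(X_1,\dots,X_5)$ realizes $U_{3,5}$ over $\mathbb{F}_3$, so that $H(A)=\min\{|A|,3\}$ for every $A\subseteq\{X_1,\dots,X_5\}$. Since $H(X_1,\dots,X_5)=3$, every triple of coordinates has entropy $3$ and hence a uniform marginal, while $H(X_i,X_j\mid X_k,X_l,X_m)=0$ whenever $\{i,j,k,l,m\}=[5]$; together these force the support $S:=\supp(\mu)\subseteq\mathbb{F}_3^5$ to consist of exactly $27$ equally likely outcomes. The triple condition says precisely that the projection of $S$ to any three coordinates is a bijection onto $\mathbb{F}_3^3$, so $S$ is a strength-$3$, index-$1$ orthogonal array, equivalently an (a priori nonlinear) MDS code with parameters $(5,27,3)$ over $\mathbb{F}_3$ in which any four coordinates determine the fifth. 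Taking $0\in S$ without loss of generality, one checks easily that each pair of coordinates carries exactly two weight-$3$ codewords, which over $\mathbb{F}_3$ are forced to be negatives of one another (the two nonzero symbols being $\pm 1$), and that no codeword has weight $4$; consequently the distance distribution is \emph{forced} to $A_3=20$, $A_4=0$, $A_5=6$.

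This last observation is exactly why the short counting arguments used for $U_{2,4}$ and $U_{2,5}$ cannot be repeated here: the forced distribution is a perfectly legitimate nonnegative MDS distribution, so any contradiction must come from finer structure rather than from the marginals alone.

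The cleanest way to conclude is by duality. Since $U_{3,5}=U_{2,5}^{*}$, and since by the earlier lemmas being $3$-entropic coincides with being an almost affine (secret-sharing) matroid over an alphabet of size $3$, I would invoke --- or prove, following Simonis and Ashikhmin --- that this class is closed under matroid duality over the same alphabet. The previous lemma, that $U_{2,5}$ is not $3$-entropic, then yields at once that $U_{3,5}=U_{2,5}^{*}$ is not $3$-entropic. Here the only real content is the duality property of almost affine codes; the rest is formal.

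For a self-contained proof I would instead attack the orthogonal array directly. Parametrizing $S$ by the free basis triple $(X_1,X_2,X_3)$ and writing $X_4=P(X_1,X_2,X_3)$, $X_5=Q(X_1,X_2,X_3)$, the bijectivity of the remaining triples forces $P$ and $Q$ to be orthogonal Latin cubes of order $3$: each is a bijection in every single coordinate when the other two are fixed, and fixing any one of the three coordinates turns $(P,Q)$ into a bijection of the remaining $\mathbb{F}_3^2$ onto $\mathbb{F}_3^2$. The linear case is immediate: writing $P=\sum_i\alpha_i X_i$ and $Q=\sum_i\beta_i X_i$, the Latin-cube condition needs all $\alpha_i,\beta_i\neq 0$ and the plane conditions need the three columns $(\alpha_i,\beta_i)^{\top}\in\mathbb{F}_3^2$ to be pairwise linearly independent, yet only $q-1=2$ projective points of $\mathbb{F}_3^2$ have both coordinates nonzero. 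The nonlinear case I would settle using the classification of Latin squares of order $3$: standardize the layer $X_1=0$ to the essentially unique pair of MOLS of order $3$, then propagate the cross-layer bijection constraints to the layers $X_1=1,2$ and verify that no completion exists. I expect this finite case analysis to be the main obstacle, precisely because (as noted) it cannot be replaced by a counting shortcut --- it is the non-dual face of the nonexistence of three mutually orthogonal Latin squares of order $3$.
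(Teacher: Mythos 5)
There is a genuine gap, and it originates in the proposal's pivotal structural claim. You assert that ``the short counting arguments used for $U_{2,4}$ and $U_{2,5}$ cannot be repeated here'' because the forced distance distribution $A_3=20$, $A_4=0$, $A_5=6$ is ``perfectly legitimate,'' so that any contradiction must come from finer structure. This is false, and it is exactly the point where the paper's proof succeeds: the support consists of $27$ equiprobable points of $\mathbb{F}_3^5$, and any two distinct support points must differ in at least three coordinates (if they agreed in three coordinates, those three coordinates would determine the other two, forcing equality). Hence the Hamming balls of radius $1$ around the $27$ support points are pairwise disjoint; each such ball contains $1 + 5\cdot 2 = 11$ points, so the support forces at least $27\cdot 11 = 297 > 243 = 3^5$ points, a contradiction. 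That four-line sphere-packing count is the paper's entire proof. Your ``legitimate'' distance distribution in fact violates the Hamming bound, and also Delsarte's linear-programming conditions: the MacWilliams transform of $x^5 + 20x^2y^3 + 6y^5$ is $27\left(x^5 + 10xy^4 - 2y^5\right)$, whose negative coefficient already certifies that no code, linear or not, has this distribution. So the contradiction does come from the marginals alone.

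Because of that misdiagnosis, neither of your two fallback routes constitutes a proof. The duality route requires that the class of $3$-entropic matroids (equivalently, by the paper's Section 4, almost affine / secret-sharing matroids over a $3$-letter alphabet) be closed under matroid duality. The paper proves duality only for $\mathbb{F}$-representable matroids (Woodall's theorem), you do not prove it, and it is not something you can simply invoke from Simonis--Ashikhmin: whether almost affine matroids are closed under duality is, to the best of current knowledge, an open question (closely tied to the open problem of whether the dual of an ideal access structure is ideal), so ``invoke --- or prove'' hides the entire difficulty. The direct route via the orthogonal-array parametrization $X_4 = P(X_1,X_2,X_3)$, $X_5 = Q(X_1,X_2,X_3)$ is set up correctly, and your linear subcase is fine, but the nonlinear case --- which you yourself identify as the main obstacle --- is left as an unexecuted finite case analysis. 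As written, the proposal therefore does not establish the lemma; the packing argument above closes it immediately.
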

\begin{proof} 
  As before, we suppose for a contradiction that there is a vector $X = (X_1, \dots, X_5)$ of random variables such that $H(A) = \min\sset{|A|,3}$ for all $A \subseteq \sset{X_1, \dots, X_5}$.

  Every three distinct variables are independent and they determine the other two variables. It~follows that, for every event, its probability is either zero or $\frac{1}{27}$. But there are only $81$ outcomes and 27 of them occur with positive probability. Each of those 27 must differ from the others in at least three places, because if two outcomes are equal in three positions, the other two are determined and thus equal. This means that the Hamming balls of radius $1$ around the outcomes with positive probability are disjoint. Each of these Hamming balls contains $11$ elements: the outcome with positive probability and the outcomes in which one variable is flipped to one of the two other possible values. Therefore,~we have at least $27 \cdot 11 = 297$ outcomes, a contradiction. 
\end{proof}

\begin{lemma} The Fano plane is not 3-entropic.
\end{lemma}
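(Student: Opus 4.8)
The plan is to show that a hypothetical $3$-entropic realization of the Fano plane $F_7$ is forced to be $\F_3$-linear, and then to invoke the classical fact that $F_7$ is representable only in characteristic two. So suppose $X_1,\dots,X_7$ with values in $\F_3$ realize the rank function of $F_7$. Since $r(F_7)=3$, the support $C\subseteq\F_3^7$ is a set of exactly $27$ equally likely words, and every coordinate is determined by any fixed basis. Each single variable is uniform, each pair is jointly uniform on $\F_3^2$ (rank $2$), and each of the seven lines $\sset{i,j,k}$ (the rank-$2$ triples, namely $\sset{2,3,7},\sset{2,4,6},\sset{3,4,5},\sset{1,2,5},\sset{1,3,6},\sset{1,4,7},\sset{5,6,7}$ for the labelling of the displayed matrix) satisfies $H(X_i,X_j,X_k)=2$. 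Hence on each line any two of the variables are jointly uniform and determine the third, so the map $(X_i,X_j)\mapsto X_k$ is a Latin square (quasigroup operation) on $\F_3$.

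The decisive step, and the one I expect to carry the argument, is a linearisation lemma: every Latin square of order $3$ is $\F_3$-affine. Indeed there are exactly $12$ Latin squares of order $3$, and the $12$ maps $(i,j)\mapsto ai+bj+c$ with $a,b\in\sset{1,2}$ and $c\in\F_3$ are pairwise distinct Latin squares, so they exhaust them. Thus each line contributes an affine relation on $C$ with all three coefficients nonzero. Translating $C$ by a fixed codeword --- a coordinatewise bijection of $\F_3^7$ that preserves every projection size and hence every entropy --- I may assume $0\in C$, so that all seven line relations become homogeneous with nonzero (unit) coefficients.

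It then remains to derive a contradiction from these seven homogeneous relations, which is pure $\F_3$ linear algebra. Choosing the basis $\sset{X_2,X_3,X_4}$, the three lines $\sset{3,4,5},\sset{2,4,6},\sset{2,3,7}$ let me write $X_5=tX_3+uX_4$, $X_6=rX_2+sX_4$, $X_7=pX_2+qX_3$ with $p,q,r,s,t,u$ all units. Expressing $X_1$ through each of the three lines $\sset{1,2,5},\sset{1,3,6},\sset{1,4,7}$ and equating the resulting coefficients of $X_2,X_3,X_4$ yields, after cancelling squares of units, the single constraint $pqrstu=1$. On the other hand the last line contributes a relation $\lambda X_5+\mu X_6+\nu X_7=0$ with units $\lambda,\mu,\nu$; substituting the expressions for $X_5,X_6,X_7$ forces $\mu r+\nu p=\lambda t+\nu q=\lambda u+\mu s=0$, and multiplying these three identities gives instead $pqrstu=-1$. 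Since $1\neq-1$ in $\F_3$ (that is, $2\neq 0$), no such $C$ exists, and $F_7$ is not $3$-entropic.

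The only genuinely delicate points are the linearisation lemma above, which is exactly what converts the entropic hypothesis into linear algebra, and the bookkeeping ensuring that every coefficient remains nonzero through the substitutions so that the two evaluations of $pqrstu$ really have opposite signs. Everything past the linearisation reduces to the familiar characteristic-two obstruction for $F_7$, now phrased over the alphabet $\F_3$; I would present the sign computation only in outline, since its sole content is $1\neq-1$.
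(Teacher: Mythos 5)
Your proof is correct, but it takes a noticeably different route from the paper's, and the comparison is instructive. Both arguments ultimately rest on the same order-3 phenomenon: a rank-2 triple forces a Latin square of order 3, and every such Latin square is affine (equivalently, there is a single isotopy class, the $\mathbb{Z}_3$ addition table). The paper invokes this fact only once and implicitly --- its combinatorial analysis of the nine triples with $X_4=0$, using Hamming distances and the ``fix two digits, the third is determined'' property, is exactly the statement that the Latin square defining $X_4$ is isotopic to addition, yielding $X_4=X_1+X_2+X_3$ after relabelling. From there the paper stays in entropy language: invariance arguments (changing $X_1$, or shifting $X_2$ and $X_3$ oppositely, leaves $X_7$ fixed) give one-to-one correspondences between $X_7, X_6, X_5$ and $X_2+X_3$, $X_2+X_4$, $X_3+X_4$, and the contradiction is that $X_5,X_6,X_7$ then determine $X_1$, violating $H(X_1,X_5,X_6,X_7)=3>2=H(X_5,X_6,X_7)$. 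You instead make the linearisation lemma explicit (12 Latin squares of order 3, all of the form $(i,j)\mapsto ai+bj+c$ with $a,b$ units), apply it to all seven lines at once, translate the support so the relations are homogeneous, and finish with pure coefficient bookkeeping: I checked your two computations, and they do produce $pqrstu=1$ from the three lines through element $1$ and $pqrstu=-1$ from the line $\{5,6,7\}$, so the contradiction $1\neq -1$ in $\mathbb{F}_3$ stands (the same ``$2\neq 0$'' that powers the paper's final step). What each approach buys: yours is more modular and mirrors the classical proof that $F_7$ is representable only in characteristic 2, and it isolates exactly where $p=3$ enters (the Latin-square classification, which fails for larger orders); the paper's hybrid entropy argument avoids the full sign bookkeeping, and is what lets the authors remark afterwards that their proof rules out $p$-entropic realizations of the Fano plane for every $p>2$, a generalization your order-3 lemma cannot deliver. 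One presentational note: your opening frames the proof as ``force linearity, then cite non-representability of $F_7$ in odd characteristic,'' but what you actually execute is self-contained --- you rederive the obstruction rather than cite it --- which is the better way to present it.
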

\begin{proof} Suppose for a contradiction that the Fano plane is 3-entropic and that $X = \sset{X_1, \dots, X_7}$ is a set of random variables whose entropy corresponds to their rank in the Fano matroid as shown in Figure~\ref{fig:fano-plane}. Since the maximum size of an independent set in the Fano matroid is three, any three independent variables determine the values of all the others; in particular, there are at most $27$ outcomes with positive probability, which we denote by their values on the independent set $X_1, X_2, X_3$. Since $H(X_1, X_2, X_3) = 3$, each of these outcomes has probability $\frac{1}{27}$, whereas all other outcomes have probability zero. It follows that we have a map $f \colon \mathbb{F}_3^3 \rightarrow \mathbb{F}_3^4$ mapping the values on $X_1, X_2, X_3$ to the values on $X_4, \dots, X_7$, where $X_2$ and $X_3$ determine $X_7$, $X_1$ and $X_2$ determine $X_5$ and $X_3$ and $X_1$ determine $X_6$ but every change of one of $X_1, X_2, X_3$ must change $X_4$. 

We consider the set of nine assignments of $X_1, X_2, X_3$ for which $X_4 = 0$. If every two of these have pairwise distance at least three, we can only have three distinct assignments. This implies that we may assume that there are two assignments with distance two. Furthermore, if we fix any two digits, exactly~one choice is valid for the remaining digit. Therefore, up to isomorphism (exchanging symbols), the set looks as follows: $\sset{000, 012, 021, 102, 111, 120, 201, 210, 222}$; and thus $X_4 = X_1 + X_2 + X_3$. 

The random variables $X_2, X_3, X_4$ determine $X_5, X_6, X_7$ and
$X_1$. In particular, both of the pairs $X_1$, $X_1 + X_2 + X_3$ and
$X_2$, $X_3$ determine $X_7$. 

Changing $X_1$ does not change $X_7$ and neither does simultaneously
adding $k$ to $X_2$ and subtracting $k$ from $X_3$. Therefore, keeping $X_2
+ X_3$ constant will keep $X_7$ constant and $H(X_7 | X_2 + X_3) = 0$,
and~$H(X_2 + X_3 | X_7) = 0$. This implies that there is a one-to-one correspondence between $X_7$ and $X_2 + X_3$ and similarly between $X_6$ and $X_2 + X_4$ and between $X_5$ and $X_3 + X_4$. But then $X_5, X_6, X_7$ allow us to find $X_2 + X_3$, $X_2 + X_4$ and $X_3 + X_4$ and thus $2X_2 + 2X_3 + 2X_4$ and $X_2 + X_3 + X_4$ (since $2 \neq 0$ in $\mathbb{F}_3$), which is $X_1$. This shows that $H(X_1 | X_5, X_6, X_7) = 0$ and thus $3 = H(X_1, X_5, X_6, X_7) = H(X_5, X_6, X_7) = 2$, a contradiction.
\end{proof} 
The above proof actually shows that the Fano plane is not $p$-entropic for any $p > 2$, which gives an alternative proof that it is not $\mathbb{F}_p$-representable for $p>2$ either. 

The dual $F_7^*$ of the Fano plane is $\mathbb{F}_2$-representable and a representation is given by the columns of the matrix 
$\begin{bmatrix}
1 & 1 & 1 & 1 & 1 & 1 & 1 \\
1 & 1 & 0 & 0 & 0 & 1 & 1 \\
1 & 0 & 1 & 0 & 1 & 0 & 1 \\
1 & 0 & 0 & 1 & 1 & 1 & 0  
\end{bmatrix}.$ This shows that every $3$-element set is independent in $F_7^*$, thus~its circuits are exactly the complements of the three-element circuits of the Fano plane. To give a better understanding of these matroids, we expanded the symmetrical representation of $F_7$ given in Reference~\cite{pegg} and shown in Figure~\ref{fig:fp}a to $F_7^*$. The result is shown in Figure~\ref{fig:fp}b. Each color connects the elements of a circuit in one figure and the corresponding circuit given by its complement in the other figure. The cyclical order of the nodes in Figure~\ref{fig:fp}a yields a rainbow Hamilton cycle (one edge of each color) in Figure~\ref{fig:fano-plane}.
\begin{figure}[H]
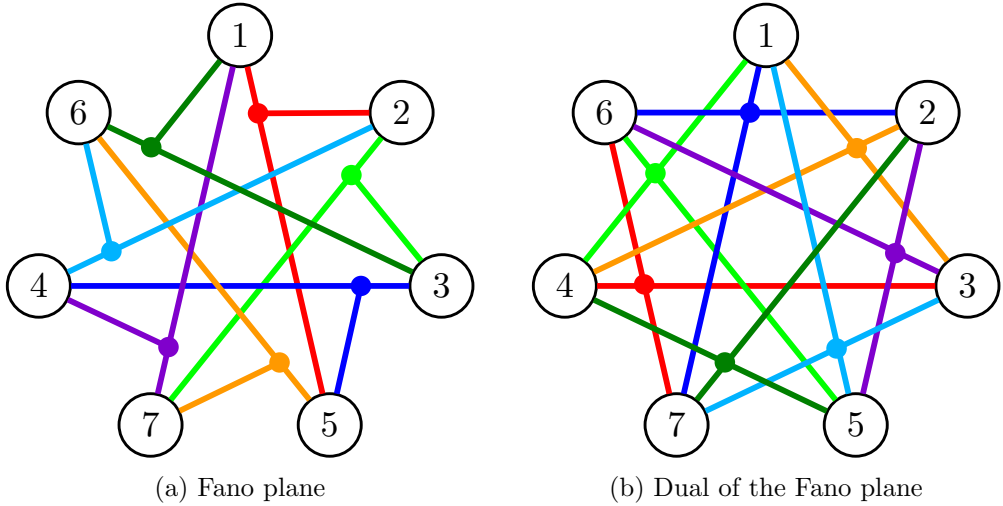

\centering{
  \begin{subfigure}[b]{0.4\linewidth}
    \tikzc{Fano plane}{1}{fano2}
  \end{subfigure}
  \hspace*{0.5cm}
  \begin{subfigure}[b]{0.4\linewidth}
    \tikzc{Dual of the Fano plane}{1}{fano3}
  \end{subfigure}}
  \vspace{3pt}
  \caption{A symmetrical view of the circuits of the Fano plane and its dual.}
  \label{fig:fp}
\end{figure}

\begin{lemma}
The dual of the Fano plane is not $3$-entropic. 
\end{lemma}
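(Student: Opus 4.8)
The plan is to argue by contradiction and to show that a $3$-entropic representation of $F_7^*$ must in fact be $\mathbb{F}_3$-affine, so that $F_7^*$ would be $\mathbb{F}_3$-representable, contradicting the excluded-minor characterization of ternary matroids (Seymour~\cite{seymour}, Bixby~\cite{bixby}, attributed to Reid). (One might instead hope to deduce the statement from the previous lemma via a duality principle for entropic matroids mirroring Woodall's Theorem~\ref{thm:repdual}; since no such principle is established here, I argue directly.) First I would record the structure of $F_7^*$: it has rank $4$ on $\sset{1,\dots,7}$, every $3$-element set is independent, and its circuits are exactly the seven $4$-element complements of the lines of $F_7$. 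A representation yields a vector $X=(X_1,\dots,X_7)$ uniform on a set $C\subseteq\mathbb{F}_3^7$ with $|C|=3^4=81$, each outcome having probability $3^{-4}$; after relabelling symbols coordinatewise I may assume $0\in C$, and any basis (independent $4$-set) determines all remaining coordinates, so plain packing bounds will not suffice and the argument must exploit the specific circuit structure.

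Next I would fix the basis $\sset{1,2,3,5}$ and check that each non-basis coordinate $4,6,7$ lies in a circuit together with three basis coordinates, namely $\sset{1,2,3,4},\sset{2,3,5,6},\sset{1,3,5,7}$. Since within a $4$-element circuit any three elements are independent and determine the fourth, this expresses $X_4,X_6,X_7$ as functions $g,h,\ell$ of three basis variables each, every such function being a bijection in each of its three arguments (a ``Latin'' function). The core of the proof is to force these functions into $\mathbb{F}_3$-affine form. I would proceed in two stages. In the first stage, mirroring the Fano-plane proof, I analyse a level set such as $\sset{X_4=0}$, which by the Latin property is the graph of an order-$3$ Latin square and hence, up to coordinatewise relabelling, the affine plane $\sset{X_1+X_2+X_3=0}$; combining this with the conditional-entropy constraints $H(\,\cdot\mid\cdot\,)=0$ imposed by the remaining circuits $\sset{1,2,6,7},\sset{1,4,5,6},\sset{2,4,5,7},\sset{3,4,6,7}$, I force $g$ into the fully affine form $X_4=X_1+X_2+X_3$. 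In the second stage this affine relation collapses the constraints from $\sset{1,4,5,6}$ and $\sset{2,4,5,7}$ to $H(X_6\mid X_1,X_4,X_5)=0$ and $H(X_7\mid X_2,X_4,X_5)=0$, which force $h$ to depend on $X_2,X_3$ only through $X_2+X_3$ and $\ell$ to depend on $X_1,X_3$ only through $X_1+X_3$; each of $X_6,X_7$ is then a Latin function of two $\mathbb{F}_3$-variables, i.e.\ an order-$3$ Latin square, which is affine after relabelling. Hence every coordinate is an $\mathbb{F}_3$-affine function of the basis, $C$ is a coset of an $\mathbb{F}_3$-linear code, and $F_7^*$ is $\mathbb{F}_3$-representable --- the desired contradiction.

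The main obstacle is the first stage: forcing the genuinely trivariate function $g$ to be affine. A priori a $3$-entropic representation need not be linear, and the local Latin structure of the single circuit $\sset{1,2,3,4}$ does not by itself pin down $g$ (the non-zero level sets of $g$ need not be the parallel planes). Closing this gap requires genuinely using the interlocking of the remaining circuits --- in particular $\sset{3,4,6,7}$ and $\sset{1,2,6,7}$, each of which forces $X_4$ to be constant along a line and thereby rules out non-affine Latin solutions. This rigidity is exactly the phenomenon special to the small field size $p=3$ and is the same difficulty that makes the corresponding statement open for general $p$. As a variant that avoids invoking Reid's theorem, one can stop as soon as enough structure is pinned down and instead derive a direct rank contradiction in the style of the Fano proof, exhibiting an independent triple whose entropy is forced down to $2$.
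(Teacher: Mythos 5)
Your overall strategy is viable and genuinely different from the paper's in its endgame. The paper never linearizes the whole distribution: after extracting the single affine relation $X_2 = X_1+X_6+X_7$ from the circuit $\{1,2,6,7\}$ (by the same Latin-square/level-set analysis used for the Fano plane), it plays the remaining circuits against one another to show that $X_3,X_4,X_5$ determine $2X_2$, whence $H(X_2,X_3,X_4,X_5)=3\neq 4$ --- a direct, self-contained entropy contradiction requiring no excluded-minor theorem. Your plan instead forces every coordinate to be an $\mathbb{F}_3$-affine function of the basis $\{1,2,3,5\}$ and then invokes the non-$\mathbb{F}_3$-representability of $F_7^*$; this is legitimate and non-circular (and you could avoid Reid's theorem entirely: the paper's preceding lemma shows $F_7$ is not $3$-entropic, hence not $\mathbb{F}_3$-representable, and Theorem~\ref{thm:repdual} transfers this to $F_7^*$). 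Your Stage~2 is correct as sketched: once $X_4=X_1+X_2+X_3$, the circuits $\{1,4,5,6\}$ and $\{2,4,5,7\}$ collapse $X_6$ and $X_7$ to bivariate Latin functions of $(X_2+X_3,X_5)$ and $(X_1+X_3,X_5)$, and order-$3$ Latin squares are affine.

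However, as written the proof has a genuine gap exactly at its crux: Stage~1, forcing the trivariate $g$ to be affine, is never carried out --- you only assert that the interlocking circuits $\{3,4,6,7\}$ and $\{1,2,6,7\}$ ``force $X_4$ to be constant along a line,'' with no argument given. Moreover, your diagnosis of the difficulty is wrong: the Latin property coming from the single circuit $\{1,2,3,4\}$ already pins $g$ down, and your claim that ``the non-zero level sets of $g$ need not be the parallel planes'' is false. Each level set $\{g=c\}$ contains exactly one point over each pair of fixed coordinates, hence is the graph of an order-$3$ Latin square, hence (by the same bivariate rigidity you use in Stage~2) an affine plane whose normal vector has all coordinates nonzero; the three level sets are pairwise disjoint, and two non-parallel affine planes in $\mathbb{F}_3^3$ always intersect, so the level sets are parallel planes, and $g$ is a permutation of $\mathbb{F}_3$ composed with $x_1+x_2+x_3$ --- affine, since every permutation of $\mathbb{F}_3$ is affine. (Equivalently, slice-wise: $g(x_1,x_2,x_3)=a(x_3)x_1+b(x_3)x_2+c(x_3)$ with $a(x_3),b(x_3)\in\{1,2\}$; Latinness in $x_3$ forces $\sum_{x_3}a(x_3)\equiv\sum_{x_3}b(x_3)\equiv 0 \pmod 3$, so $a$ and $b$ are constant and $c$ is a permutation.) So the gap is real but closable by a short argument needing no additional circuits; with that lemma inserted, your proof goes through and is a clean alternative to the paper's, at the cost of proving more (full linearity of the support) than the contradiction strictly requires.
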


\begin{proof}
  Suppose for a contradiction that $X = (X_1, \dots, X_7)$ is a vector of random variables whose entropy coincides with the rank function of $F_7^*$. Since $H(X_2, X_3, X_4, X_5) = 4$ and $H(X) = 4$, $\P{X = x} \in \sset{0, \frac{1}{81}}$ for all $x \in \mathbb{F}_3^7$. We refer to the events with positive probability as outcomes. 

By permuting the symbols, we may assume that $0000000$ is a possible outcome. We consider the other outcomes $(X_1, X_6, X_7)$ for $X_2 = 0$. No two of these outcomes can have distance one, because $X_1, X_2, X_6, X_7$ is a cycle, so for fixed $X_2$, any two distinct possible outcomes must have distance at least two on their restriction to $(X_1, X_6, X_7)$. In the proof of the previous lemma, we have already to shown that by switching digits, we may assume that the set of images is $\sset{000, 012, 021, 102, 111, 120, 201, 210, 222}$. As shown in Figure~\ref{fig:cube}, this also determines the other two sets (but not necessarily which of them is which). This shows that $X_1 + X_6 + X_7$ is sufficient to determine $X_2$ and vice versa; by flipping symbols 1 and 2 for $X_2$, we may assume that $X_1 + X_6 + X_7 = X_2$. 

\tikzfigs{Values of $(X_1, X_6, X_7)$ colored by corresponding value of $X_2$.}{0.45}{cube}

We now fix $X_3$. Then $X_4$ is determined by either $X_1, X_2=X_1 + X_6 + X_7$ or $X_6, X_7$ and thus changing $X_1$ or adding $k$ to $X_6$ and subtracting it from $X_7$ does not change $X_4$. This implies that $X_4$ depends only on $X_6 + X_7$ (and $X_3$) and thus $H(X_3, X_4, X_6 + X_7) = 2$. Analogously, $H(X_3, X_5, X_1 + X_7) = 2$ and $H(X_4, X_5, X_1 + X_6) = 2$. Therefore, $X_3$, $X_4$ and $X_5$ determine $X_6 + X_7 + X_1 + X_7 + X_1 + X_6 = 2 (X_1 + X_6 + X_7) = 2 X_2$ and since $2 \neq 0$ in $\mathbb{F}_3$, this shows that $H(X_2, X_3, X_4, X_5) = 3$, contradicting the assumption that $X$ had the entropy function given by the rank in $F_7^*$.  
\end{proof}

Combining these four lemmas with the characterization of ternary matroids, we have proved the following theorem (the interesting part being the only if part).
\begin{theorem}
A matroid is $3$-entropic if and only if it is $\mathbb{F}_3$-representable. 
\end{theorem}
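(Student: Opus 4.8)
The plan is to prove the two directions separately, with the bulk of the content coming from assembling results already established rather than from new computation. The forward direction is immediate: if $M$ is $\mathbb{F}_3$-representable, then by the lemma relating representability to entropy, applied with $|\mathbb{F}_3| = 3$, the matroid $M$ is $3$-entropic. This is exactly the easy inclusion flagged in the theorem's parenthetical remark.

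For the converse---the interesting direction---I would argue by contradiction through forbidden minors. Suppose $M$ is $3$-entropic but not $\mathbb{F}_3$-representable. By the Seymour--Bixby characterization of ternary matroids quoted above, $M$ must then contain a minor $N$ isomorphic to one of $U_{2,5}$, $U_{3,5}$, the Fano plane $F_7$, or its dual $F_7^*$. The key structural fact to invoke here is Lemma~\ref{lem:minor}, which shows that being $3$-entropic is closed under taking minors; hence $N$ is itself $3$-entropic.

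This is precisely where the four preceding lemmas discharge the argument: each of them establishes that one of the four excluded minors is \emph{not} $3$-entropic. Since $N$ is isomorphic to one of these, we reach a contradiction, so in fact $M$ is ternary. Combining this with the forward direction yields the claimed equivalence, and the proof is essentially a routine application of the minor-closure property together with the Reid characterization.

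The genuine obstacle has already been handled in the four lemmas rather than in the assembly described above. The arguments for $U_{2,5}$ and $U_{3,5}$ are the cleaner cases, resting on disjointness of Hamming balls and direct probability-mass accounting. The delicate cases are the Fano plane and its dual, where one must first extract the affine structure hidden in the conditional dependencies---deducing relations such as $X_4 = X_1 + X_2 + X_3$ and the one-to-one correspondences between individual variables and sums like $X_2 + X_3$---before the rank contradiction can be forced. Once those four non-entropy results are in hand, the theorem follows with no further difficulty.
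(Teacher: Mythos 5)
Your proposal is correct and follows exactly the paper's argument: the easy direction via the representable-implies-entropic lemma, and the converse by combining the Seymour--Bixby excluded-minor characterization with the minor-closure lemma and the four lemmas showing $U_{2,5}$, $U_{3,5}$, $F_7$, and $F_7^*$ are not $3$-entropic. No gaps; this is the same assembly the paper performs.
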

\section{Comments for General Primes \boldmath{$p$}}
For ground sets of arbitrary size $p$, being $p$-representable is a
stronger assumption than being $p$-entropic as the example of Simonis
and Ashikhmin~\cite{simonis} of the non-Pappus matroid (see
Figure~\ref{fig:non-pappus}) shows. However, no counterexamples exist
in the case where the ground set has prime order.  

In this section, we show that for primes $p$, every $p$-entropic
matroid of rank at most two is linear, that is, let $M$ be an entropic
matroid with ground set $E$ and $H(E) \leq 2$, then $M$ is linear.  If
$H(E) < 2$, this is true since any basis has at most one
element. Furthermore, we may assume that every $X \in E$ satisfies
$H(X) = 1$, for otherwise $X$ is deterministic and is represented by the zero vector in every linear representation. 

\begin{lemma}
  Let $M$ be a $p$-entropic matroid of rank 2. If there are two
  elements $X$ and $Y$ in the ground set $E$ with $H(X, Y ) = 1$, then
  $M$ is $\mathbb{F}_p$-linear if and only if $M \setminus \sset{X}$ is.
\end{lemma}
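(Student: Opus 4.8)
The plan is to first convert the entropic hypothesis into a purely matroidal statement and then argue that $X$ and $Y$ are interchangeable in any linear representation. Since we are in the regime where $H(X) = H(Y) = 1$, the chain rule gives $H(Y \mid X) = H(X,Y) - H(X) = 0$ and symmetrically $H(X \mid Y) = 0$, so $X$ and $Y$ are deterministic functions of one another. In rank terms this reads $r(\sset{X}) = r(\sset{Y}) = r(\sset{X,Y}) = 1$, i.e. $X$ and $Y$ are \emph{parallel}: they lie in a common rank-one flat, so $X \in \mathrm{cl}(\sset{Y})$ and $Y \in \mathrm{cl}(\sset{X})$. From this I extract the two facts I will use. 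First (redundancy), adjoining $X$ to any set already containing $Y$ leaves the rank unchanged. Second (interchangeability), for every $T \subseteq E \setminus \sset{X,Y}$ one has $r(T \cup \sset{X}) = r(T \cup \sset{X,Y}) = r(T \cup \sset{Y})$; monotonicity of the closure operator gives $Y \in \mathrm{cl}(\sset{X}) \subseteq \mathrm{cl}(T \cup \sset{X})$ and $X \in \mathrm{cl}(\sset{Y}) \subseteq \mathrm{cl}(T \cup \sset{Y})$, so both of the outer ranks equal $r(T \cup \sset{X,Y})$.

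The forward implication is immediate, since deletion is a minor operation and $\mathbb{F}_p$-representability is preserved under it: if $f \colon E \to V$ represents $M$ over $\mathbb{F}_p$, then its restriction to $E \setminus \sset{X}$ represents $M \setminus \sset{X}$, because the defining rank identity for subsets of $E \setminus \sset{X}$ is a special case of the one for $M$.

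For the converse, suppose $g \colon E \setminus \sset{X} \to V$ is an $\mathbb{F}_p$-representation of $M \setminus \sset{X}$; note $Y \in E \setminus \sset{X}$, so $g(Y)$ is defined. I define $f \colon E \to V$ by $f = g$ on $E \setminus \sset{X}$ and $f(X) = g(Y)$, and verify $r_M(S) = \mathrm{rank}\, f(S)$ for every $S \subseteq E$ by splitting on how $S$ meets $\sset{X,Y}$. If $X \notin S$, then $f(S) = g(S)$ and $r_M(S) = r_{M \setminus X}(S)$, so equality holds. If $X, Y \in S$, then $f(X) = g(Y) = f(Y)$ already lies in $f(S \setminus \sset{X})$, whence $\mathrm{rank}\, f(S) = \mathrm{rank}\, f(S \setminus \sset{X})$, matching the redundancy fact $r_M(S) = r_M(S \setminus \sset{X})$. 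If $X \in S$ but $Y \notin S$, I set $S' = (S \setminus \sset{X}) \cup \sset{Y}$; then $f(S)$ and $f(S')$ span the same subspace because $f(X) = f(Y)$, while interchangeability gives $r_M(S) = r_M(S') = r_{M \setminus X}(S') = \mathrm{rank}\, g(S') = \mathrm{rank}\, f(S')$. In every case the two sides agree, so $f$ represents $M$ over $\mathbb{F}_p$.

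The computation is essentially routine once the parallel structure is established; the only point requiring care is the interchangeability identity $r(T \cup \sset{X}) = r(T \cup \sset{Y})$ for arbitrary $T$, which does not follow from $r(\sset{X}) = r(\sset{Y})$ alone but does follow from $X$ and $Y$ being parallel, via the closure argument above. Beyond keeping the case analysis on $S \cap \sset{X,Y}$ straight, I expect no genuine obstacle.
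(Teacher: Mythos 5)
Your proof is correct and takes essentially the same route as the paper: both directions proceed by deleting $X$, taking a representation $g$ of $M \setminus \sset{X}$, extending it by $f(X) = g(Y)$, and verifying the rank identity via the same case analysis on $S \cap \sset{X,Y}$. The only difference is cosmetic --- you justify the two key identities (redundancy and interchangeability) by noting that $X$ and $Y$ are parallel elements and invoking the closure operator, whereas the paper verifies the same facts directly through entropy chain-rule manipulations and submodularity.
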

\begin{proof}
  If $M$ is $\mathbb{F}_p$-representable, then so is $M \setminus \sset{X}$,
  since it is a minor-closed property. Suppose that $M \setminus \sset{X}$ is
  representable and let $f \colon E \setminus \sset{X} \rightarrow V$ be
  a representation and let $g \colon E \rightarrow V$ be defined as
  $f(Z)$ for $Z \neq X$ and $f(X) = f(Y )$. Let $S \subseteq E$. Then
  $\dim(\spann(g(S))) = H(S)$ for $X \not\in E$. If $X \in S$ but $Y
  \not\in S$, then $\dim(\spann(g(S))) = \dim(\spann(f(S \cup \sset{Y
  }))) = H(S \cup \sset{Y })$ and 
  \begin{align*}
    H(S \cup \sset{Y }) &= H(S) + H(Y |S) \\
    &= H(S) + H(X|S) + H(X, Y |S) - H(X|S) + H(Y |S) - H(X, Y |S) \\
    &= H(S) + H(X|S) + H(Y |X, S) - H(X|Y, S) = H(S \cup \sset{X}).
  \end{align*}
 
  If $X, Y \in S$, then $\dim(\spann(g(S))) = \dim(\spann(f(S\setminus
  \sset{X}))) = H(S\setminus \sset{X}) = H(S)$ by applying
  submodularity to the sets $\sset{X, Y }$ and
  $S\setminus\sset{X}$. This proves that $g$ is an
  $\mathbb{F}_p$-representation of $M$.
\end{proof}

With the above lemma, we have reduced the problem to considering uniform matroids. For any prime $p$, the uniform matroid $U_{2,p+1}$ is
$\mathbb{F}_p$-representable by choosing the images of $E$ as $$(0,
1),(1, 0),(1, 1),(1, 2), \dots ,(1, p - 1) \in \mathbb{F}_p^2.$$ 

Each
pair of these $p + 1$ vectors is independent and a basis of
$\mathbb{F}^2_p$, thus they represent $U_{2,p+1}$.  The~following
lemma shows that any larger uniform matroid is neither $p$-entropic nor
$\mathbb{F}^2_p$-representable.

\begin{lemma} The uniform matroid $U_{2,p+2}$ is not $p$-entropic for
  any $p \in \mathbb{N}_{\geq 2}$.
\end{lemma}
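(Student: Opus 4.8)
The plan is to mimic the counting arguments used above for $U_{2,4}$ and $U_{2,5}$ (which are exactly the cases $p=2$ and $p=3$ of this lemma), carried out for general $p$. Suppose for contradiction that there are random variables $X_1, \dots, X_{p+2}$ with values in $[p]$ whose entropy realizes the rank function of $U_{2,p+2}$, so that $H(X_i) = 1$ for every $i$, $H(X_i, X_j) = 2$ for every $i \neq j$, and $H(X_1, \dots, X_{p+2}) = 2$. Since $H$ is constant equal to $2$ on every set of size at least two, monotonicity gives $H(X_3, \dots, X_{p+2} \mid X_1, X_2) = 0$; as $(X_1, X_2)$ is uniform on $[p]^2$, the whole vector is a deterministic function of $(X_1, X_2)$. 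Hence there are exactly $p^2$ outcomes of positive probability, each occurring with probability $p^{-2}$, and each marginal $X_i$ is uniform on $[p]$.

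The key structural fact I would record next is a minimum-distance condition: for any two coordinates $i \neq j$ we have $H(X_i, X_j) = 2 = H(X_1, \dots, X_{p+2})$, so the values of $X_i$ and $X_j$ already determine the entire outcome. Consequently any two distinct outcomes agree in at most one coordinate, that is, they differ in at least $p+1$ of the $p+2$ coordinates.

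With this in hand, the contradiction comes from counting zeros. Since relabeling the alphabet of each variable separately leaves all entropies unchanged, I may assume the all-zero vector $\mathbf 0$ is an outcome. By the minimum-distance condition, every other outcome agrees with $\mathbf 0$ in at most one coordinate, hence contains at most one zero. For each fixed coordinate $i$, uniformity of $X_i$ forces exactly $p$ of the $p^2$ outcomes to have a zero in position $i$; one of these is $\mathbf 0$, and the remaining $p-1$ each have their unique zero in position $i$. Summing over the $p+2$ coordinates counts each outcome with exactly one zero once, giving $(p+2)(p-1) = p^2 + p - 2$ such outcomes. Adding $\mathbf 0$ itself, the number of outcomes with no zero at all equals $p^2 - 1 - (p^2 + p - 2) = 1 - p$, which is negative for every $p \geq 2$, a contradiction.

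The steps above are all routine; the only point requiring care is the reduction that makes the counting possible, namely extracting the ``agree in at most one coordinate'' property and justifying the per-coordinate relabeling (valid because entropy is invariant under applying an independent bijection to the alphabet of each variable). Conceptually, the same obstruction can be phrased as follows: an alphabet-$p$ realization of $U_{2,p+2}$ is equivalent to a family of $p$ mutually orthogonal Latin squares of order $p$, namely the $p$ variables beyond a chosen coordinate pair, whereas the classical bound permits at most $p-1$ of them. I would mention this as the structural reason behind the numerical contradiction.
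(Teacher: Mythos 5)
Your proof is correct, and while it starts from the same place as the paper's, the endgame is genuinely different. Both arguments first establish the same structural facts: there are exactly $p^2$ equiprobable outcomes, any two coordinates determine the entire outcome (so two distinct outcomes agree in at most one coordinate), and one may normalize by per-coordinate relabeling so that $\mathbf{0}$ is an outcome. From there the paper performs a \emph{second} normalization: among the $p$ outcomes with $X_1 = 0$, every other coordinate takes each symbol exactly once, so after simultaneously permuting symbols these outcomes become $(0,k,k,\dots,k)$ for $k \in \{0,\dots,p-1\}$; then any outcome with $X_1 \neq 0$ must have $X_2,\dots,X_{p+2}$ pairwise distinct, which is impossible since that is $p+1$ distinct values among $p$ symbols. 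You instead skip the second normalization and run a global double count of zero entries: each coordinate lies in exactly $p$ outcomes with a zero there, one being $\mathbf{0}$ and the other $p-1$ having that position as their unique zero, so $(p+2)(p-1)$ outcomes have exactly one zero, forcing the number of zero-free outcomes to be $p^2 - 1 - (p+2)(p-1) = 1-p < 0$, a contradiction. Your version buys symmetry and avoids the ``simultaneously change the other symbols'' step (the one point in the paper's proof needing care); the paper's version is slightly shorter once that normalization is granted. Your closing remark is also accurate and worth keeping: realizations of $U_{2,n+2}$ over an alphabet of size $p$ correspond exactly to families of $n$ mutually orthogonal Latin squares of order $p$, so $U_{2,p+2}$ would require $p$ MOLS while at most $p-1$ can exist; this explains the obstruction structurally and shows the result is tight, matching the paper's construction showing $U_{2,p+1}$ is $\mathbb{F}_p$-representable (hence $p$-entropic) for prime $p$.
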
 
\begin{proof}
  Suppose not and let $C$ denote the set of possible outcomes
  for a probability distribution on $p + 2$ variables representing
  $U_{2,p+2}$. By changing symbols, we may assume that $(0, \dots, 0)$
  is a possible outcome. Furthermore, there are $p^2$ outcomes and hence
  $p$ of them begin with a zero. These $p$ outcomes have the same
  value at the first coordinate $X_1$ but all other values are
  distinct (i.e.,~each~$X_i$ for $i > 1$ takes all of its $p$ possible
  values exactly once among these $p$ outcomes, including value zero
  for outcome $(0, \dots, 0)$).  Therefore, we can simultaneously
  change the other symbols so that these p outcomes become $(0, 0,
  \dots, 0),(0, 1, \dots, 1),(0, 2, \dots, 2), \dots ,(0, p-1, \dots,
  p-1)$. But then any other outcome not starting with zero satisfies that $X_2, \dots, X_{p+2}$ all take different values in
  $\mathbb{Z}_p$. Since there are only $p$ values but $p + 1$
  variables, this is a contradiction.
\end{proof}

This shows that line matroids, which are among the forbidden minors of
binary and ternary matroids, are $p$-entropic if and only if they are
$\mathbb{F}_p$-linear.

\section{Application: Entropic Matroids in Coding}\label{polar}
We recall here a result proved in Reference ~\cite{corr} that makes entropic matroids emerge in a probabilistic context and which gives further motivations to studying entropic matroids. 
The result gives in particular a rate-optimal code for compressing correlated sources, similarly to the  channel counter-part developed in Reference~\cite{mac}. 

Let $X^n=(X_1,\dots,X_n)$ be an i.i.d.\ sequence of discrete random variables taking values in $\X^m$.
That is, $X^n$ is an $m \times n$ random matrix with i.i.d.\ columns of distribution $\mu$ on $\X^m$. 
One can assume that the support of $\X$ is finite (countable supports can be handled with truncation arguments) and to further simplify, we assume that $\X$ is binary, associating each element in the binary field, that is, $\X=GF(2)$. 

Due to the i.i.d.\ nature of the sequence, the entropy of $X^n$ is the sum of each components' entropies $H(\mu)$, i.e., 
\begin{align}
H(X^n)=nH(\mu).
\end{align}

The next result shows that it is possible to transform the sequence $X^n$ with an invertible map that extracts the entropy in subsets of the components. In words, the transformation takes the i.i.d.\ vectors under an arbitrary $\mu$ to a sequence of distributions that correspond in the limit to entropic matroids. 

\begin{theorem}[Abbe \cite{corr}]\label{main}
Let $m$ be a positive integer, $n$ be a power of 2 and $X^n$ be an $m \times n$ random matrix with i.i.d.\ columns of distribution $\mu$ on $\F_2^m$.
Let $Y^n=X^n G_n$ over $\F_2$, where $G_n=\bigl[\begin{smallmatrix} % or pmatrix or bmatrix or Bmatrix or ...
      1 & 0 \\
      1 & 1 \\
   \end{smallmatrix}\bigr]^{\otimes \log_2(n)}$. For any $\e=O(2^{-n^\beta})$, $\beta < 1/2$, we have 
\begin{align} 
|\{i \in [n] :  H(Y_i[S]|Y^{i-1}) \notin \mZ \pm \e, \text{ for any } S \subseteq [m] \}| = o(n).
\end{align}
\end{theorem}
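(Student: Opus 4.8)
The plan is to recast the statement as a \emph{polarization} result for the vector-valued conditional-entropy process and to prove it in three stages: (i) extract a bounded martingale from the recursive structure of $G_n$, (ii) identify its fixed points as exactly those conditional laws that are uniform on a coset of a subgroup of $\F_2^m$, for which \emph{every} subset-entropy is an integer, and (iii) upgrade almost-sure convergence to the quantitative ``all but $o(n)$ indices, within $\e$'' statement via a fast-polarization estimate.

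First I would exploit the recursive (butterfly) structure $G_n = G_2^{\otimes \log_2 n}$. For a uniformly random index $i \in [n]$ with binary expansion $b_1 \cdots b_k$, $k = \log_2(n)$, the law of the vector $(H(Y_i[S] \mid Y^{i-1}))_{S \subseteq [m]}$ coincides with that of an entropy vector $\mathbf{H}_k = (H_k(S))_S$ produced by a tree process: starting from $\mathbf{H}_0(S) = H(\mu_S)$ and applying, according to $b_1, \dots, b_k$, the single step that sends two independent copies $X, X'$ (with accumulated past $W, W'$) either to the \emph{minus} source $X + X'$ with past $(W, W')$, or to the \emph{plus} source $X'$ with past $(W, W', X + X')$. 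For $S = [m]$ the chain rule gives the exact conservation $H^-([m]) + H^+([m]) = 2 H([m])$, so $(H_k([m]))_k$ is a bounded martingale; for proper $S$ the plus branch over-conditions on the full vector $X + X'$, so $(H_k(S))_k$ is only a bounded supermartingale. Either way the process converges almost surely, and for the martingale the increment $H_k([m]) - H^+([m]) = I(X; X + X' \mid W)$ tends to $0$.

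Next I would characterize the fixed points. Since $X, X'$ are i.i.d.\ over the group $\F_2^m$ given the past, $H(X + X' \mid W) \ge H(X \mid W)$ always, with equality if and only if the conditional law of $X$ is uniform on a coset of a subgroup $V \le \F_2^m$; in that case $H(X[S] \mid W) = \dim \pi_S(V) \in \mZ$ for every $S$, where $\pi_S$ denotes projection onto the coordinates in $S$ (this is exactly the linear/entropic situation of the lemma showing $\F$-representable matroids are $|\F|$-entropic). Thus a vanishing martingale increment forces the limiting entropy vector to be that of a linear matroid, which is integral on \emph{all} $S$ simultaneously, so almost-sure convergence of $(H_k([m]))_k$ already yields $H_\infty(S) \in \mZ$ for all $S$ almost surely.

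The remaining, and hardest, step is to make this quantitative at the stated rate. The obstacle is a \emph{stability} estimate: I need that a small increment $I(X; X + X' \mid W) \le \delta$ forces the conditional law of $X$ to be within $\mathrm{poly}(\delta)$ in total variation of uniform on some coset, and hence $H(X[S] \mid W)$ to be within $\mathrm{poly}(\delta)$ of an integer for every $S$ at once (one pays only the fixed factor $2^m$ for the union over subsets). Alongside this I would run the standard fast-polarization argument of Arıkan--Telatar: introduce a Bhattacharyya-type potential tied to the distance-from-integrality that \emph{squares} on the plus branch, so that for all but an exponentially small fraction of leaves the potential drops below $2^{-n^{\beta}}$ for every $\beta < 1/2$; translating this back through the stability estimate gives $H(Y_i[S] \mid Y^{i-1}) \in \mZ \pm \e$ for all $S$ on all but $o(n)$ indices $i$. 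I expect the robust stability lemma, together with verifying the squaring behaviour of the multi-user potential, to be the main technical work, with the martingale setup and fixed-point identification comparatively routine.
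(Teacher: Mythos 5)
The first thing to note is that the paper contains no proof of Theorem~\ref{main} at all: it is recalled verbatim from Abbe~\cite{corr}, so your attempt can only be measured against the argument in that reference (and the closely related Abbe--Telatar multiple-access polarization work). At the level of architecture you have reconstructed that argument correctly: the tree/butterfly reduction of the index set, the observation that $H_k([m])$ is a bounded martingale while $H_k(S)$ for proper $S\subsetneq[m]$ is only a bounded supermartingale, the identification of fixed points with coset-uniform conditional laws, and a separate fast-polarization phase to reach accuracy $2^{-n^\beta}$ are exactly the ingredients used there. One substantive point in your stage (ii) is glossed over, however. The conditioning variable $W$ takes many values $w$, and the fixed-point condition involves \emph{two independent} pairs $(X,W)$, $(X',W')$. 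Exactness forces each conditional law $P_{X \mid W=w}$ to be uniform on a coset of a subgroup $V_w$, but integrality of $H(X[S]\mid W)$ needs more: it is a convex combination over $w$ of the integers $\dim \pi_S(V_w)$ (with $\pi_S$ the projection onto the coordinates in $S$), so one must show the subgroup is the \emph{same} for almost every $w$. This does follow, but only by using the product structure of $(w,w')$: the sum of a uniform law on a coset of $V_w$ and an independent uniform law on a coset of $V_{w'}$ has entropy $\dim(V_w+V_{w'})$, which exceeds $\dim V_w$ unless $V_{w'}\subseteq V_w$, and symmetry then forces $V_w=V_{w'}$ almost everywhere. Your write-up asserts a single $V$ without flagging where this is used, and the step ``vanishing increment $\Rightarrow$ integral entropy vector'' is incomplete as written.

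The genuine gap is stage (iii), which you yourself defer as ``expected technical work'': both the poly$(\delta)$ stability lemma (small increment $\Rightarrow$ total-variation closeness to a coset-uniform law $\Rightarrow$ all $2^m$ subset entropies near integers) and the squaring behaviour of a ``distance-from-integrality'' potential are postulated, not proved, and they carry essentially all of the quantitative content of the theorem. Moreover, the specific route you posit is stronger than what the cited proof needs or establishes. That proof does not prove any TV-stability estimate with polynomial rates; it first runs plain, non-quantitative martingale polarization to conclude that for every fixed $\delta>0$ all but $o(n)$ indices are $\delta$-close to integrality, and then bootstraps the accuracy to $2^{-n^\beta}$, $\beta<1/2$, by tracking \emph{scalar} Bhattacharyya-type parameters of individual bits (equivalently, linear forms $\langle u, Y_i\rangle$, $u\in\F_2^m\setminus\{0\}$, which behave well because $u^\top Y^n=(u^\top X^n)G_n$), using that such parameters provably square under the plus transform and standard two-sided inequalities between $Z$ and $H$; the union over the finitely many $S$ costs only a constant factor since $m$ is fixed. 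So your roadmap is the right one, but the two steps you leave open are precisely where the theorem lives, and the polynomial stability lemma you propose to rest it on is neither known to hold in that form nor the mechanism the actual proof uses.
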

In other words, one starts with an i.i.d.\  sequence of random vectors under a distribution $\mu$ that defines an {\it entropic polymatroid} $[m] \supseteq S \mapsto  H(S)$ and after the transformation $G_n$, one obtains a sequence of random vectors which is no longer i.i.d.\ but where each random vector given the past defines an {\it entropic matroid} in the limit. Having a matroid structure is of course much easier to handle for compression purposes, one simply has to pick a basis for each matroid, store the components in that basis and the other components are fully dependent on these so they can be recovered without being stored. Of course, in practice $n$ is large but finite, and each random vector defines  a polymatroid that is {\it close} to a matroid but a continuity argument allows to show that the components outside of the bases can still be recovered but only {\it with high probability}. Since a compression code is allowed to fail with a low probability of error, this is not an issue. Understanding the structure of these entropic matroids allows then one to better understand how the stored components can be allocated over the different components---see Reference~\cite{corr} for further details. 

\vspace{6pt} 

\section{Acknowledgements}

This research was partly funded by NSF grant CIF-1706648.

\end{document}